\documentclass[11pt]{article}

\usepackage{amsmath,amsthm,amssymb}
\usepackage{fullpage}
\usepackage{hyperref}
\usepackage[usenames,dvipsnames,svgnames,table]{xcolor}
\usepackage{graphicx}
\usepackage{caption}
\usepackage{subcaption}
\usepackage{authblk}
\usepackage{algorithm}
\usepackage[noend]{algorithmic}

\newtheorem{theorem}{Theorem}
\newtheorem{problem}{Problem}
\newtheorem{corollary}[theorem]{Corollary}
\newtheorem{lemma}{Lemma}
\newtheorem{proposition}{Proposition}

\theoremstyle{definition}
\newtheorem{definition}{Definition}

\newsavebox{\mybox}

\newcommand{\qedclaim}{\hfill $\diamond$ \medskip}

\title{A quasi linear-time $b$-{\sc Matching} algorithm on distance-hereditary graphs and bounded split-width graphs
\thanks{This work was supported by the Institutional research programme PN 1819 "Advanced IT resources to support digital transformation processes in the economy and society - RESINFO-TD" (2018), project PN 1819-01-01"Modeling, simulation, optimization of complex systems and decision support in new areas of IT\&C research", funded by the Ministry of Research and Innovation, Romania.}}
\author[1,2]{Guillaume Ducoffe}
\author[1,3]{Alexandru Popa}

\affil[1]{\small National Institute for Research and Development in Informatics, Romania}
\affil[2]{\small The Research Institute of the University of Bucharest ICUB, Romania}
\affil[3]{\small University of Bucharest, Faculty of Mathematics and Computer Science}
\date{}

\begin{document}

\maketitle

\begin{abstract}
We present a quasi linear-time algorithm for {\sc Maximum Matching} on distance-hereditary graphs and some of their generalizations. 
This improves on [Dragan, WG'97], who proposed such an algorithm for the subclass of (tent,hexahedron)-free distance-hereditary graphs. 
Furthermore, our result is derived from a more general one that is obtained for $b$-{\sc Matching}.
In the (unit cost) $b$-{\sc Matching} problem, we are given a graph $G=(V,E)$ together with a nonnegative integer capacity $b_v$ for every vertex $v \in V$.
The objective is to assign nonnegative integer weights $(x_e)_{e \in E}$ so that: for every $v \in V$ the sum of the weights of its incident edges does not exceed $b_v$, and $\sum_{e \in E} x_e$ is maximized.
We present the first algorithm for solving $b$-{\sc Matching} on cographs, distance-hereditary graphs and some of their generalizations in quasi linear time.

For that, we use a decomposition algorithm that outputs for any graph $G$ a collection of subgraphs of $G$ with no edge-cutsets inducing a complete bipartite subgraph ({\it a.k.a.}, {\em splits}).
The latter collection is sometimes called a {\em split decomposition} of $G$.
Furthermore, there exists a generic method in order to design graph algorithms based on split decomposition [Rao, DAM'08].
However, this technique only applies to ``localized'' problems: for which a ``best'' partial solution for any given subgraph in a split decomposition can be computed almost independently from the remaining of the graph.
Such framework does not apply to matching problems since an augmenting path may cross the subgraphs arbitrarily.

We introduce a new technique that somehow captures {\em all} the partial solutions for a given union of subgraphs in a split decomposition, in a compact and amenable way for algorithms --- assuming some piecewise linear assumption holds on the value of such solutions.
The latter assumption is shown to hold for $b$-{\sc Matching}. 
Doing so, we prove that solving $b$-{\sc Matching} on any pair $G,b$ can be reduced in quasi linear-time to solving this problem on a collection of smaller graphs: that are obtained from the subgraphs in any split decomposition of $G$ by replacing every vertex with a constant-size module.
In particular, if $G$ has a split decomposition where all subgraphs have order at most a fixed $k$, then we can solve $b$-{\sc Matching} for $G,b$ in ${\cal O}((k\log^2{k})\cdot(m+n) \cdot \log{||b||_1})$-time.
This answers an open question of [Coudert et al., SODA'18].
\end{abstract}

{\bf Keywords:} maximum matching; $b$-matching; FPT in P; split decomposition; distance-hereditary graphs.

\section{Introduction}\label{sec:intro}

The {\sc Maximum Matching} problem takes as input a graph $G=(V,E)$ and it asks for a subset $F$ of pairwise disjoint edges of maximum cardinality.
This is a fundamental problem with a wide variety of applications~\cite{Bun00,DeS84,LoP09,Pul95}.
Hence, the computational complexity of {\sc Maximum Matching} has been extensively studied in the literature.
In particular, this was the first problem shown to be solvable in polynomial-time~\cite{Edm65}.
Currently, the best-known algorithms for {\sc Maximum Matching} run in ${\cal O}(m\sqrt{n})$-time on $n$-vertex $m$-edge graphs~\cite{MiV80}.
Such superlinear running times can be prohibitive for some applications.
In order to overcome the superlinear barrier, we initiated in~\cite{CDP18} the complexity study of {\sc Maximum Matching} on {\em bounded clique-width} graph classes (the latter research direction being also motivated by the recent linear-time algorithm for {\sc Maximum Matching} on bounded treewidth graphs, see~\cite{FLPS+17,IOO18}).
As a new attempt in this direction, our initial motivation for this paper was to design a quasi linear-time algorithm for {\sc Maximum Matching} on distance-hereditary graphs and some classes of graphs that are defined via the so-called {\em split-decomposition}.
We succeeded in our initial goal and, in fact, we prove a more general result.
All graphs studied in what follows have bounded clique-width~\cite{Rao08b}.

\subsection{Related work}

There is a long line of work whose aim is to find more efficiently solvable special cases for {\sc Maximum Matching} ({\it e.g.}, see~\cite{Cha96,DaK98,Dra97,FPT97,FGV99,Glo67,HoK73,LiR93,MNN17,KaS81,YuY93,YuZ07,Yus13}).
In particular, there exist linear-time algorithms for computing a maximum matching on trees, cographs~\cite{YuY93} and subclasses of distance-hereditary graphs~\cite{Dra97}
\footnote{We stress that the work in~\cite{Dra97} is focused on a generic method in order to compute maximum matchings. In particular, the complexity of {\sc Maximum Matching} on general distance-hereditary graphs is not considered.}.
Furthermore, a natural follow-up of such investigations is to study whether the techniques used in every case can also be applied to larger classes of graphs.
This has be done, especially, for the {\sc Split} \& {\sc Match} technique introduced in~\cite{YuY93} for cographs ({\it e.g.}, see~\cite{CDP18,FPT97,FGV99} where the latter technique has been applied to other graph classes).
We also consider this above aspect in our work, as we seek for a {\sc Maximum Matching} algorithm on distance-hereditary graphs that can be easily generalized to larger classes of graphs that we define in what follows.

\paragraph{\sc Fully Polynomial Parameterized Algorithms.}
More precisely, we seek for {\em parameterized} algorithms (FPT).
Formally, given a graph invariant denoted $\pi$ ({\it e.g.}, maximum degree, treewidth, etc.), we address the question whether there exists a {\sc Maximum Matching} algorithm running in time ${\cal O}(k^c \cdot (n+m) \cdot \log^{{\cal O}(1)}(n))$, for some constant $c$, on every graph $G$ such that $\pi(G) \leq k$
\footnote{The polynomial dependency in $k$ is overlooked by some authors~\cite{MNN16}, who accept any functional dependency in the parameter. Since {\sc Maximum Matching} is already known to be polynomial-time solvable, we find it more natural to consider this restricted setting.}.
Note that such an algorithm runs in quasi linear time for any constant $k$, and that it is faster than the state-of-the art algorithm for {\sc Maximum Matching} whenever $k = {\cal O}(n^{\frac 1 {2c} - \varepsilon })$, for some $\varepsilon > 0$.
This kind of FPT algorithms for polynomial problems have attracted recent attention~\cite{AVW16,BFNN17,CDP18,FKMN+17,FLPS+17,GMN17,Hus16,IOO18,MNN16} --- although some examples can be found earlier in the literature~\cite{YuZ07,Yus13}.
We stress that {\sc Maximum Matching} has been proposed in~\cite{MNN16} as the ``drosophila'' of the study of these FPT algorithms in P.
We continue advancing in this research direction.

In general, the parameters that are considered in this type of study represent the closeness of a graph to some class where the problem at hands can be easily solved (distance from triviality).
In particular, {\sc Maximum Matching} can be solved in ${\cal O}(k^2 \cdot n\log n)$-time on graphs with {\em treewidth} at most $k$~\cite{FLPS+17,IOO18}, and in ${\cal O}(k^4\cdot n + m)$-time on graphs with {\em modular-width} at most $k$~\cite{CDP18}.
The latter two parameters somewhat represent the closeness of a graph, respectively, to a tree or to a cograph.
In this work, we study the parameter {\em split-width}, that is a measure of how close a graph is to a distance-hereditary graph~\cite{Rao08b}.
Distance-hereditary graphs are a natural superclass of both cographs and trees and they have already been well studied in the literature~\cite{BaM86,DrN00,GaP03,GiP12,GoR00}.
Our work subsumes the one of~\cite{CDP18} with modular-width -- that is an upper-bound on split-width -- but it is uncomparable with the work of~\cite{FLPS+17,IOO18} with treewidth.

\paragraph{\sc Split Decomposition.}
Bounded split-width graphs can be defined in terms of {\em split decomposition}.
A {\em split} is a join that is also an edge-cut.
By using pairwise non crossing splits, termed ``strong splits'', we can decompose any graph into degenerate and prime subgraphs, that can be organized in a treelike manner.
The latter is termed split decomposition~\cite{Cun82}, and it is our main algorithmic tool for this paper.
The split-width of a graph is the largest order of a non degenerate subgraph in some canonical split decomposition.
In particular, distance-hereditary graphs are exactly the graphs with split-width at most two~\cite{GiP12}.

Many NP-hard problems can be solved in polynomial time on bounded split-width graphs ({\it e.g.}, see~\cite{Rao08b}).
In particular we have that {\em clique-width} is upper-bounded by split-width, and so, any FPT algorithm parameterized by clique-width can be transformed into a FPT algorithm parameterized by split-width (the converse is not true).
Recently we designed FPT algorithms for polynomial problems when parameterized by split-width~\cite{CDP18}.
It turns out that many ``hard'' problems in P such as {\sc Diameter}, for which a conditional superlinear lower-bound has been proved~\cite{RoV13}, can be solved in ${\cal O}(k^{{\cal O}(1)} \cdot n + m)$-time on graphs with split-width at most $k$. 
However, we left this open for {\sc Maximum Matching}.
Indeed, as we detail next the standard design method for FPT algorithms based on split decomposition does not apply to matching problems.
Our main contribution in~\cite{CDP18} was a {\sc Maximum Matching} algorithm based on the more restricted {\em modular decomposition}.
As our main contribution in this paper, we present an algorithm that is based on split decomposition in order to solve some generalization of {\sc Maximum Matching} --- thereby answering positively to the open question from~\cite{CDP18}.
Our techniques are quite different than those used in~\cite{CDP18}.

\subsection{Contributions}

In order to discuss the difficulties we had to face on, we start giving an overview of the FPT algorithms that are based on split decomposition.
\begin{itemize}
\item We first need to define a weighted variant of the problem that needs to be solved for every component of the decomposition separately (possibly more than once).
This is because there are additional vertices in a split component, {\it i.e.}, not in the original graph, that substitute the other remaining components; intuitively, the weight of such additional vertex encodes a partial solution for the union of split components it has substituted.
\item Then, we take advantage of the treelike structure of split decomposition in order to solve the weighted problem, for every split component sequentially, using dynamic programming.
Roughly, this part of the algorithm is based on a ``split decomposition tree'' where the nodes are the split components and the edges represent the splits between them.
Starting from the leaves of that tree (resp. from the root), we perform a tree traversal.
For every split component, we can precompute its vertex-weights from the partial solutions we obtained for its children (resp., for its father) in the split decomposition tree.
\end{itemize}

\paragraph{Our approach.}
In our case, a natural weighted variant for {\sc Maximum Matching} is the more general unit-cost $b$-{\sc Matching} problem~\cite{EdE70}.
Roughly, $b$-{\sc Matching} differs from {\sc Maximum Matching} in that a vertex can be matched to more than one edge, and an edge can be put in the matching more than once.
In more details, every vertex $v$ is assigned some input capacity $b_v$, and the goal is to compute edge-weights $(x_e)_{e \in E}$ so that: for every $v \in V$ the sum of the weights of its incident edges does not exceed $b_v$, and $\sum_{e \in E} x_e$ is maximized.
The particular case of {\sc Maximum Matching} is obtained by setting $b_v = 1$ for every vertex $v$.
Interestingly, $b$-{\sc Matching} has already attracted attention on its own, {\it e.g.}, in auction theory~\cite{PeT00,Ten02}.
However. we want to stress that solving $b$-{\sc Matching} is a much more ambitious task than ``just'' solving {\sc Maximum Matching}.
In particular, {\sc Maximum Flow} on general graphs is computationally equivalent to $b$-{\sc Matching} on bipartite graphs~\cite{Mad13}.
The corresponding reductions preserve the treewidth of the graph -- up to a constant-factor -- and so, solving $b$-{\sc Matching} in quasi linear-time on bounded-treewidth graphs would already be an impressive achievement on its own.
On general graphs, the best-known algorithms for $b$-{\sc Matching} run in ${\cal O}(nm\log^2{n})$-time~\cite{Gab83,Gab16}.

Our main technical difficulty is how to precompute efficiently, for every component of the decomposition, the specific instances of $b$-{\sc Matching} that need to be solved .
To see that, consider the bipartition $(U,W)$ that results from the removal of a split.
In order to compute the $b$-{\sc Matching} instances on side $U$, we should be able after processing the other side $W$ to determine the number of edges of the split that are matched in a final solution.
Guessing such number looks computationally challenging.
We avoid doing so by storing a partial solution for {\em every} possible number of split edges that can be matched.
However, this simple approach suffers from several limitations.
For instance, we need a very compact encoding for partial solutions for otherwise we could not achieve a quasi linear-time complexity.
Somehow, we also need to consider the partial solutions for {\em all} the splits that are incident to the same component all at once.

\paragraph{\sc Results.}
We prove a simple combinatorial lemma that essentially states that the cardinality of a maximum $b$-matching in a graph grows as a piecewise linear function in the capacity $b_w$ of any fixed vertex $w$ (we think of $w$ as a vertex substituting a union of split components).
%
Then, we derive from our combinatorial lemma a variant of some reduction rule for {\sc Maximum Matching} that we first introduced in the more restricted case of modular decomposition~\cite{CDP18}.
Roughly, in any given split component $C_i$, we consider all the vertices $w$ substituting a union of other components.
The latter vertices are in one-to-one correspondance with the strong splits that are incident to the component.
We expand every such vertex $w$ to a module that contains ${\cal O}(1)$ vertices for every straight-line section of the corresponding piecewise-linear function.
Altogether combined, this allows us to reduce the solving of $b$-{\sc Matching} on the original graph $G$ to solving $b$-{\sc Matching} on {\em supergraphs} of every its split components.
Furthermore, we prove with the help of a dichotomic search that we only need to consider ${\cal O}(\log{||b||_1})$ supergraphs for every split component, where $||b||_1 = \sum_{v \in V} b_v$, and the respective orders of these above supergraphs can only be three times more than the respective orders of the split components.

\smallskip
Overall, our main result is that $b$-{\sc Matching} can be solved in ${\cal O}((k\log^2{k})\cdot(m+n) \cdot \log{||b||_1})$-time on graphs with split-width at most $k$ (Theorem~\ref{thm:main-result}).
It implies that {\sc Maximum Matching} can be solved in ${\cal O}((k\log^2{k})\cdot(m+n) \cdot \log{n})$-time on graphs with split-width at most $k$.
Since distance-hereditary graphs have split-width at most two, we so obtain the first known quasi linear-time algorithms for {\sc Maximum Matching} and $b$-{\sc Matching} on distance-hereditary graphs.
Furthermore, since cographs are a subclass of distance-hereditary graphs, we also obtain the first known quasi linear-time algorithm for $b$-{\sc Matching} on cographs.
We expect our approach to be useful in a broader context, {\it e.g.} for other matching and flow problems. 

\bigskip
We introduce the required terminology and basic results in Section~\ref{sec:prelim}. Then, Section~\ref{sec:var-1-vertex} is devoted to a combinatorial lemma that is the key technical tool in our subsequent analysis.
In Section~\ref{sec:algo}, we present our algorithm for $b$-{\sc Matching} on bounded split-width graphs.
We conclude in Section~\ref{sec:ccl} with some open questions.

\section{Preliminaries}\label{sec:prelim}

We use standard graph terminology from~\cite{BoM08,Die10}.
Graphs in this study are finite, simple (hence without loops or multiple edges), and connected -- unless stated otherwise.
Furthermore we make the standard assumption that graphs are encoded as adjacency lists.
Given a graph $G=(V,E)$ and a vertex $v \in V$, we denote its neighbourhood by $N_G(v) = \{ u \in V \mid \{u,v\} \in E\}$ and the set of its incident edges by $E_v(G) = \{ \{u,v\} \mid u \in N_G(v) \}$.
When $G$ is clear from the context we write $N(v)$ and $E_v$ instead of $N_G(v)$ and $E_v(G)$.

\subsection*{Split-width}

We mostly follow the presentation we gave in~\cite{CDP18}.
We start defining what a {\em simple decomposition} of a graph is.
For that, let a {\em split} in a graph $G=(V,E)$ be a partition $V = U \cup W$ such that: $\min \{ |U|, |W| \} \geq 2$; and there is a complete join between the vertices of $N_G(U)$ and $N_G(W)$.
A simple decomposition of $G$ takes as input a split $(U,W)$, and it outputs the two subgraphs $G_U = G[U \cup \{w\}]$ and $G_W = G[W \cup \{u\}]$ with $u \in N_G(W), \ w \in N_G(U)$ being chosen arbitrarily.
The vertices $u,w$ are termed {\em split marker vertices}.
A {\em split decomposition} of $G$ is obtained by applying recursively some sequence of simple decompositions ({\it e.g.}, see Fig.~\ref{fig:split-dec}).
We name {\em split components} the subgraphs in a given split decomposition of $G$.

\begin{figure}[h!]
\centering
\begin{subfigure}[b]{.46\textwidth}\centering
\includegraphics[width=.45\textwidth]{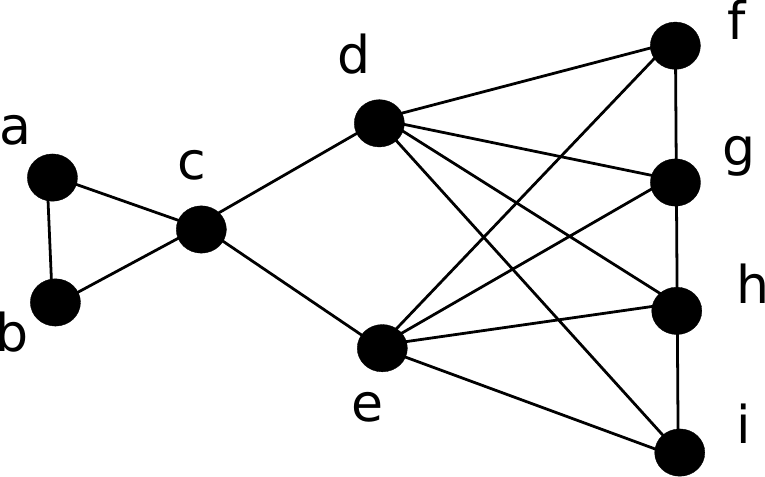}
\end{subfigure}\hfill
\begin{subfigure}[b]{.52\textwidth}\centering
\includegraphics[width=.75\textwidth]{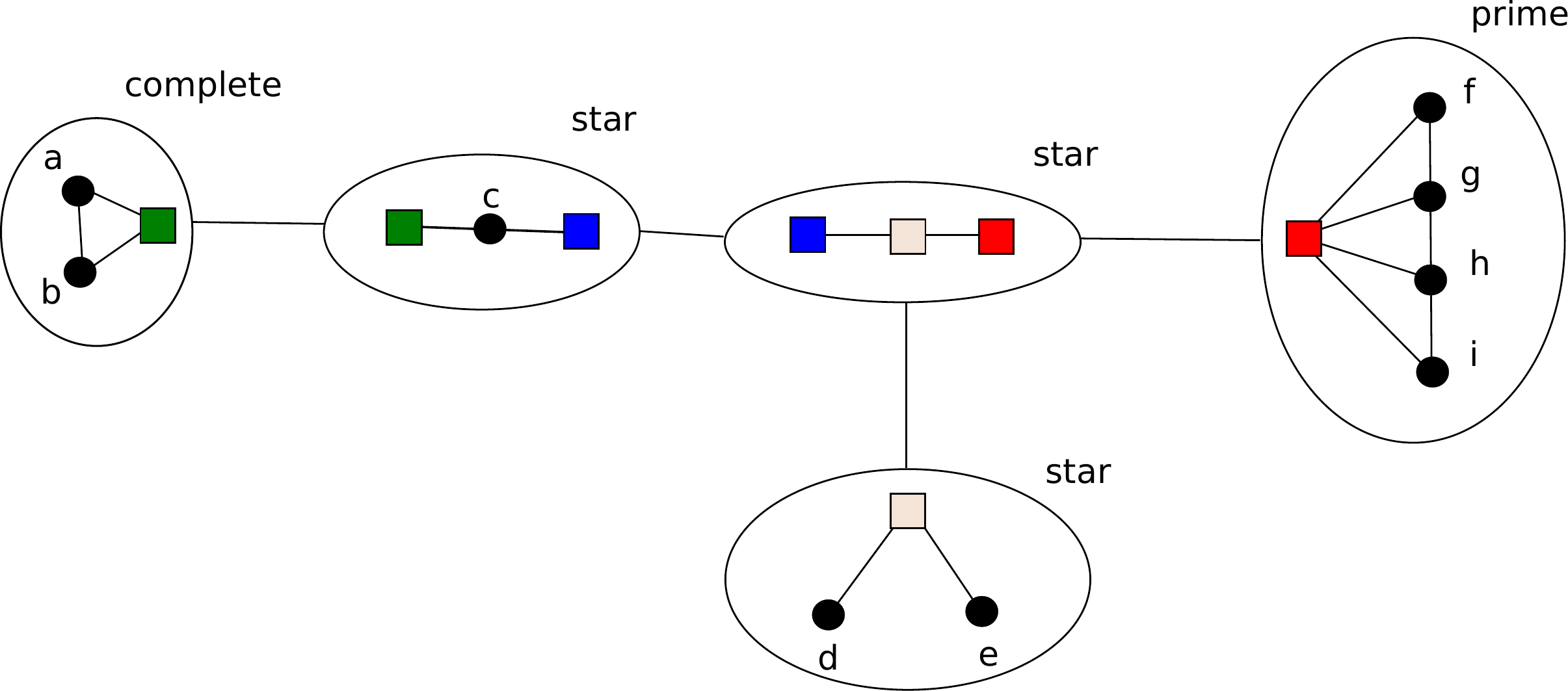}
\end{subfigure}
\caption{A graph and its split decomposition.}
\label{fig:split-dec}
\end{figure}

It is often desirable to apply simple decompositions until all the subgraphs obtained cannot be further decomposed.
In the literature there are two cases of ``indecomposable'' graphs.
Degenerate graphs are such that every bipartition of their vertex-set is a split.
They are exactly the complete graphs and the stars~\cite{Cun82}.
A graph is prime for split decomposition if it has no split.
We can define the following two types of split decomposition:
\begin{itemize}
\item{\bf Canonical split decomposition.}
Every graph has a canonical split decomposition where all the subgraphs obtained are either degenerate or prime and the number of subgraphs is minimized.
Furthermore, the canonical split decomposition of a given graph can be computed in linear-time~\cite{CDR12}.

\item{\bf Minimal split decomposition.}
A split-decomposition is {\em minimal} if all the subgraphs obtained are prime.
A minimal split-decomposition can be computed from the canonical split-decomposition in linear-time~\cite{Cun82}, that consists in performing simple decompositions on every split component that induces either a star or a complete graph of order at least four, until all such components have been replaced by a collection of triangles and paths of length three.
Doing so, we avoid handling with the particular cases of stars and complete graphs in our algorithms.
The set of prime graphs in any minimal split decomposition is unique up to isomorphism~\cite{Cun82}.
\end{itemize}

For instance, the split decomposition of Fig.~\ref{fig:split-dec} is both minimal and canonical.

\begin{definition}\label{def:sw}
The {\em split-width}
of $G$, denoted by $sw(G)$, is the minimum $k \geq 2$ such that any prime subgraph in the canonical split decomposition of $G$ has order at most $k$.
\end{definition}

We refer to~\cite{Rao08b} for some algorithmic applications of split decomposition.
In particular, graphs with split-width at most two are exactly the distance-hereditary graphs, {\it a.k.a} the graphs whose all connected induced subgraphs are distance-preserving~\cite{BaM86,GiP12}.
Distance-hereditary graphs contain many interesting subclasses of their own such as {\em cographs} ({\it a.k.a.}, $P_4$-free graphs) and $3$-leaf powers, see~\cite{GiP12}.
Furthermore, we can observe that every component in a minimal split decomposition of $G$ has order at most $\max\{3,sw(G)\}$.

\paragraph{\sc Split decomposition tree.}
A split decomposition tree of $G$ is a tree $T$ where the nodes are in bijective correspondance with the subgraphs of a given split decomposition of $G$, and the edges of $T$ are in bijective correspondance with the simple decompositions used for their computation.
More precisely:
\begin{itemize}
\item If the considered split decomposition is reduced to $G$ then $T$ is reduced to a single node;
\item Otherwise, let $(U,W)$ be a split of $G$ and let $G_U = (U \cup \{w\}, E_U), \ G_W = (W \cup \{u\}, E_W)$ be the corresponding subgraphs of $G$.
We construct the split decomposition trees $T_U, T_W$ for $G_U$ and $G_W$, respectively.
Furthermore, the split marker vertices $u$ and $w$ are contained in a unique split component of $G_W$ and $G_U$, respectively.
We obtain $T$ from $T_U$ and $T_W$ by adding an edge between the two nodes that correspond to these subgraphs.
\end{itemize}

The split decomposition tree of the canonical split decomposition, resp. of a minimal split decomposition, can be constructed in linear-time~\cite{Rao08b}.

\subsection*{Matching problems}

A matching in a graph is a set of edges with pairwise disjoint end vertices.
We consider the problem of computing a matching of maximum cardinality.

\begin{center}
	\fbox{
		\begin{minipage}{.95\linewidth}
			\begin{problem}[\textsc{Maximum Matching}]\
				\label{prob:matching} 
					\begin{description}
					\item[Input:] A graph $G=(V,E)$.
					\item[Output:] A matching of $G$ with maximum cardinality.
				\end{description}
			\end{problem}     
		\end{minipage}
	}
\end{center}

The {\sc Maximum Matching} problem can be solved in ${\cal O}(m\sqrt{n})$-time (see~\cite{Gab17} for a simplified presentation).
We do not use this result directly in our paper.
However, we do use in our analysis the notion of {\em augmenting paths}, that is a cornerstone of most matching algorithms.
Formally, let $G=(V,E)$ be a graph and $F \subseteq E$ be a matching of $G$.
A vertex is termed matched if it is incident to an edge of $F$, and exposed otherwise.
An $F$-augmenting path is a path where the two ends are exposed, all edges $\{v_{2i},v_{2i+1}\}$ are in $F$ and all edges $\{v_{2j-1}, v_{2j}\}$ are not in $F$.
We can observe that, given an $F$-augmenting path $P = (v_1,v_2, \ldots, v_{2\ell})$, the matching $E(P)\Delta F$ (obtained by replacing the edges $\{v_{2i},v_{2i+1}\}$ with the edges $\{v_{2j-1}, v_{2j}\}$) has larger cardinality than $F$.

\begin{lemma}[Berge,~\cite{Ber57}]\label{lem:berge}
A matching $F$ in $G=(V,E)$ is maximum if and only if there is no $F$-augmenting path.
\end{lemma}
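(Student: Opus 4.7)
The plan is to prove the two directions of the biconditional separately, by contrapositive in each case. The forward direction (existence of an augmenting path implies $F$ is not maximum) is essentially already recorded in the paragraph preceding the statement: if $P = (v_1, \ldots, v_{2\ell})$ is $F$-augmenting, then the alternation guarantees that each vertex is still incident to at most one edge in $E(P) \Delta F$, so $E(P) \Delta F$ is a matching, and its cardinality is $|F| + 1$ since we exchange $\ell - 1$ matched edges against $\ell$ unmatched edges along $P$. Hence $F$ cannot have been maximum.

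For the converse, suppose $F$ is not maximum and fix any matching $F^\ast$ with $|F^\ast| > |F|$. The object to analyze is the spanning subgraph $H$ on $V$ whose edge set is the symmetric difference $F \Delta F^\ast$. The key structural fact is that every vertex of $G$ is incident to at most one edge of $F$ and at most one edge of $F^\ast$, and therefore has degree at most $2$ in $H$. Consequently, each connected component of $H$ is either a simple path or a simple cycle, and along each such component the edges alternate between $F$ and $F^\ast$ (an edge of $F$ cannot be adjacent to another edge of $F$ at a common endpoint, and likewise for $F^\ast$).

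Next I would do the counting argument. In every cycle component, alternation forces equal numbers of $F$-edges and $F^\ast$-edges; the same holds for every path component of even length. Since $|F^\ast \cap H| = |F^\ast| - |F \cap F^\ast| > |F| - |F \cap F^\ast| = |F \cap H|$, there must exist at least one path component $P$ in $H$ with strictly more $F^\ast$-edges than $F$-edges, which forces $P$ to have odd length and both of its end-edges to lie in $F^\ast$. It remains to verify that the two endpoints of $P$ are exposed with respect to $F$; this is the one point where a little care is needed, and I would handle it by checking that any hypothetical $F$-edge incident to an endpoint of $P$ either lies in $F \cap F^\ast$ and so is excluded from $H$ (contradicting the endpoint actually being an endpoint, since it would then have $H$-degree $2$) or lies in $F \setminus F^\ast$ and so belongs to $H$ (again giving $H$-degree $2$). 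With the endpoints proved exposed, $P$ is an $F$-augmenting path, as required.

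I do not anticipate a genuine obstacle here: the proof is the classical structural analysis of $F \Delta F^\ast$, and the only mildly delicate step is the last one where I must rule out $F$-edges at the endpoints of $P$ that live outside $H$.
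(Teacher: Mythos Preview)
The paper does not actually prove this lemma: it is stated with a citation to Berge's original paper~\cite{Ber57} and used as a black box. So there is no ``paper's own proof'' to compare against.

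Your argument is the classical one and is essentially correct, but there is a small slip in the final verification. In the case where the hypothetical $F$-edge $e$ at an endpoint $v$ of $P$ lies in $F \cap F^\ast$, you write that $e$ is excluded from $H$ and that this forces $v$ to have $H$-degree $2$. That does not follow: an edge excluded from $H$ cannot raise the $H$-degree. The correct contradiction in this case is with $F^\ast$ being a matching: the end-edge of $P$ at $v$ lies in $F^\ast \setminus F$, and $e \in F \cap F^\ast$ would be a second $F^\ast$-edge at $v$. With that correction the argument goes through; the second case ($e \in F \setminus F^\ast$, hence $e \in H$, hence $H$-degree $\geq 2$) is fine as written.
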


It is folklore that the proof of Berge's lemma also implies the existence of many vertex-disjoint augmenting paths for small matchings.
We will use the following result in our analysis:

\begin{lemma}[Hopcroft-Karp,~\cite{HoK73}]\label{lem:hopcroft-karp}
Let $F_1,F_2$ be matchings in $G=(V,E)$.
If $|F_1| = r, \ |F_2| = s$ and $s > r$, then there exist at least $s-r$ vertex-disjoint $F_1$-augmenting paths.
\end{lemma}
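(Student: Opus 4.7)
The plan is to analyze the symmetric difference $H = (V, F_1 \triangle F_2)$, exactly as in the original Hopcroft--Karp argument. First I would observe that since $F_1$ and $F_2$ are matchings, every vertex $v \in V$ is incident to at most one edge of $F_1$ and at most one edge of $F_2$, hence has degree at most $2$ in $H$. Consequently $H$ decomposes into vertex-disjoint connected components, each of which is either an isolated vertex, a simple path, or a simple cycle. Moreover, since consecutive edges sharing an endpoint cannot both belong to the same matching, the edges along any such path or cycle alternate between $F_1$ and $F_2$.

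Next, I would classify the contribution of each component to the quantity $|F_2| - |F_1|$. Every cycle of $H$ must be of even length (because its edges strictly alternate), and therefore contains exactly as many $F_1$-edges as $F_2$-edges, contributing $0$. A path component $P$ of $H$ contributes $|E(P) \cap F_2| - |E(P) \cap F_1| \in \{-1,0,+1\}$, and the contribution equals $+1$ precisely when both end-edges of $P$ belong to $F_2$. In that case the two endpoints of $P$ are incident to an $F_2$-edge but to no $F_1$-edge, so they are exposed with respect to $F_1$, and along $P$ the edges alternate starting and ending outside $F_1$; hence $P$ is an $F_1$-augmenting path.

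Finally I would sum contributions over all components of $H$. Since cycles contribute $0$ and paths contribute at most $+1$, the total number of components that are $F_1$-augmenting paths is at least
\[
\sum_{P \text{ component}} \bigl(|E(P) \cap F_2| - |E(P) \cap F_1|\bigr) = |F_2| - |F_1| = s - r.
\]
Because these augmenting paths are distinct connected components of $H$, they are automatically pairwise vertex-disjoint, which yields the claimed $s - r$ vertex-disjoint $F_1$-augmenting paths.

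There is no real obstacle here beyond being careful with the parity/endpoint analysis that distinguishes a $+1$-contributing path from the $0$- or $(-1)$-contributing ones; the counting argument itself is immediate once the structure of $H$ is understood. I would not need any auxiliary lemmas beyond Berge's characterization (Lemma~\ref{lem:berge}), which is implicitly reproved along the way.
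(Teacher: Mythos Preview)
The paper does not supply its own proof of this lemma; it is stated with a citation to Hopcroft and Karp~\cite{HoK73} and used as a black box. Your argument via the symmetric difference $F_1 \triangle F_2$ is exactly the classical proof and is correct as written.
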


\paragraph{\sc $b$-Matching.}
More generally given a graph $G=(V,E)$, let $b : V \to \mathbb{N}$ assign a nonnegative integer capacity $b_v$ for every vertex $v \in V$.
A $b$-matching is an assignment of nonnegative integer edge-weights $(x_e)_{e \in E}$ such that, for every $v \in V$, we have $\sum_{e \in E_v} x_{e} \leq b_v$.
We define the $x$-degree of vertex $v$ as $deg_x(v) = \sum_{e \in E_v} x_{e}$.
Furthermore, the cardinality of a $b$-matching is defined as $||x||_1 = \sum_{e \in E} x_e$.
We will consider the following graph problem:

\begin{center}
	\fbox{
		\begin{minipage}{.95\linewidth}
			\begin{problem}[$b$-\textsc{Matching}]\
				\label{prob:b-matching} 
					\begin{description}
					\item[Input:] A graph $G=(V,E)$; an assignment function $b : V \to \mathbb{N}$.
					\item[Output:] A $b$-matching of $G$ with maximum cardinality.
				\end{description}
			\end{problem}     
		\end{minipage}
	}
\end{center}

For technical reasons, we will also need to consider at some point the following variant of $b$-\textsc{Matching}.
Let $c : E \to \mathbb{N}$ assign a cost to every edge.
The cost of a given $b$-matching $x$ is defined as $c \cdot x = \sum_{e \in E} c_ex_e$.

\begin{center}
	\fbox{
		\begin{minipage}{.95\linewidth}
			\begin{problem}[{\sc Maximum-Cost} $b$-\textsc{Matching}]\
				\label{prob:maxcost-b-matching} 
					\begin{description}
					\item[Input:] A graph $G=(V,E)$; assignment functions $b : V \to \mathbb{N}$ and $c : E \to \mathbb{N}$.
					\item[Output:] A maximum-cardinality $b$-matching of $G$ where the cost is maximized.
				\end{description}
			\end{problem}     
		\end{minipage}
	}
\end{center}

\begin{lemma}[~\cite{Gab83,Gab16}]\label{lem:b-matching}
For every $G=(V,E)$ and $b : V \to \mathbb{N}$, $c : E \to \mathbb{N}$, we can solve {\sc Maximum-Cost} $b$-\textsc{Matching} in ${\cal O}(nm\log^2{n})$-time.

In particular, we can solve $b$-\textsc{Matching} in ${\cal O}(nm\log^2{n})$-time.
\end{lemma}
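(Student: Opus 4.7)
The plan is to treat this as a known result proven by Gabow, and outline how one would reconstruct the argument rather than to reinvent it from scratch. The starting point is Edmonds' LP characterization of the $b$-matching polytope, where the integrality of extreme points is enforced by degree constraints $\sum_{e \in E_v} x_e \leq b_v$ together with blossom-type inequalities $\sum_{e \in E(S)} x_e \leq \lfloor (b(S)-1)/2 \rfloor$ over odd-component sets $S$. Given this LP, I would set up a primal-dual framework maintaining dual variables $y_v$ on vertices and $z_S$ on laminar families of blossoms, so that augmentations only happen along tight edges in the sense of complementary slackness.

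The first milestone would be to generalize Edmonds' weighted matching algorithm to $b$-matching. I would extend the shrink/expand blossom operations to allow multi-edges with integer weights $x_e$, and verify that dual adjustments and augmenting walks carry over; this already yields a strongly polynomial but not quasi linear bound. The quasi linear factor comes from two sources: (i) an augmentation phase that, after $\mathcal{O}(\sqrt{n})$ rounds a la Hopcroft--Karp (Lemma~\ref{lem:hopcroft-karp}), finds a maximum number of vertex-disjoint shortest augmenting walks at once, and (ii) cost scaling. For the latter I would replace $c$ by $\lceil c/2^i \rceil$ and iterate $i$ from $\lceil \log_2 \|c\|_\infty \rceil$ down to $0$, carrying forward the solution from one scale as a near-optimal starting point for the next; this is the standard route to shave an $\tilde{\mathcal{O}}(\sqrt{n})$ factor, yielding roughly $\mathcal{O}(m)$ work per augmentation and $\mathcal{O}(n \log n)$ augmentations overall, hence the $\mathcal{O}(nm \log^2 n)$ target.

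The main obstacle I expect is the efficient management of the laminar blossom structure under scaling: each dual update may trigger a cascade of expansions or contractions, and naively each such event costs $\Theta(n)$. I would handle this with split-findmin / union-find data structures tailored to dynamic trees, as in Gabow's papers, so that the amortized cost per event is $\mathcal{O}(\log n)$. A secondary obstacle is that capacities $b_v$ may themselves be large; this is dealt with by observing that the dependence on $\|b\|_1$ only enters through the cardinality of the matching, and that at any moment the support of $x$ has at most $2n$ distinct edges, so the per-augmentation complexity depends on $m$ rather than $\|b\|_1$. Specializing to $c \equiv 1$ yields the plain $b$-\textsc{Matching} bound stated in the last sentence of the lemma.
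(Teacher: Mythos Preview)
The paper does not prove this lemma at all: it is stated as a known result with the citation \cite{Gab83,Gab16} and used as a black box throughout. Your outline of the primal-dual blossom framework with cost scaling and dynamic-tree data structures is a reasonable high-level sketch of Gabow's actual argument, but there is nothing in the paper to compare it against, and for the purposes of this work a one-line appeal to the cited references is all that is expected.
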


There is a nonefficient (quasi polynomial) reduction from $b$-\textsc{Matching} to {\sc Maximum Matching} that we will use in our analysis ({\it e.g.}, see~\cite{Tut54}).
More precisely, let $G,b$ be any instance of $b$-\textsc{Matching}.
The ``expanded graph'' $G_b$ is obtained from $G$ and $b$ as follows.
For every $v \in V$, we add the nonadjacent vertices $v_1, v_2, \ldots, v_{b_v}$ in $G_b$.
Then, for every $\{u,v\} \in E$, we add the edges $\{u_i,v_j\}$ in $G_b$, for every $1 \leq i \leq b_u$ and for every $1 \leq j \leq b_v$.
It is easy to transform any $b$-matching of $G$ into an ordinary matching of $G_b$, and vice-versa.

\section{Changing the capacity of one vertex}\label{sec:var-1-vertex}

We first consider an auxiliary problem on $b$-matching that can be of independent interest.
Let $G=(V,E)$ be a graph, let $w \in V$ and let $b : V \setminus w \to \mathbb{N}$ be a partial assignment.
We denote $\mu(t)$ the maximum cardinality of a $b$-matching of $G$ provided we set to $t$ the capacity of vertex $w$.
Clearly, $\mu$ is nondecreasing in $t$. 
Our main result in this section is that the function $\mu$ is essentially piecewise linear (Proposition~\ref{prop:monotonic}).
We start introducing some useful lemmata.

\begin{lemma}\label{lem:gap}
$\mu(t+1) - \mu(t) \leq 1$.
\end{lemma}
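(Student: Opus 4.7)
The plan is to prove the upper bound by a direct modification of an optimal $b$-matching. Let $x^*$ be an optimal $b$-matching for the instance where $w$ has capacity $t+1$, so that $||x^*||_1 = \mu(t+1)$. I will construct from $x^*$ a feasible $b$-matching $x'$ for the instance where $w$ has capacity $t$, such that $||x'||_1 \geq ||x^*||_1 - 1$; together with the optimality of $\mu(t)$ this gives $\mu(t) \geq \mu(t+1) - 1$, which is the claim.

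The construction splits into two cases according to $deg_{x^*}(w)$. If $deg_{x^*}(w) \leq t$, then $x^*$ is already feasible when we lower the capacity of $w$ to $t$ (no other constraint is affected, since capacities on $V \setminus \{w\}$ are unchanged), and we simply set $x' := x^*$. Otherwise $deg_{x^*}(w) = t+1$, which forces $t+1 \geq 1$, hence there exists an edge $e_0 \in E_w$ with $x^*_{e_0} \geq 1$; I define $x'_{e_0} := x^*_{e_0} - 1$ and $x'_e := x^*_e$ for all other edges. Decreasing the weight of $e_0$ by one makes $deg_{x'}(w) = t$ and can only decrease the $x$-degree of the other endpoint of $e_0$, so all capacity constraints remain satisfied. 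In either case $||x'||_1 \geq ||x^*||_1 - 1$.

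There is no real obstacle: the only point requiring care is that the edge $e_0$ used in the second case must actually exist, which follows immediately from $deg_{x^*}(w) \geq 1$. No augmenting-path machinery is needed for this lemma; it is purely a perturbation argument on a single optimal solution.
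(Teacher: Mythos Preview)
Your proof is correct but takes a different, more elementary route than the paper. The paper argues via the expanded graph $G_{b,t}$: it takes a maximum matching $F_t$ of $G_{b,t}$, observes that $G_{b,t+1}$ differs from $G_{b,t}$ only by the addition of one new false twin $w_{t+1}$, and then applies the Hopcroft--Karp lemma to conclude that any gap $\mu(t+1)-\mu(t)=r\geq 1$ would yield $r$ vertex-disjoint $F_t$-augmenting paths, each forced to pass through $w_{t+1}$, hence $r=1$. You instead stay at the level of $b$-matchings and simply decrement one edge at $w$; this avoids the augmenting-path machinery entirely. Your argument is cleaner for this lemma in isolation, but the paper's setup is not wasted: the expanded-graph and augmenting-path framework is exactly what is reused in the proofs of Lemmas~\ref{lem:after-2} and~\ref{lem:after-1}, where a plain perturbation no longer suffices (one must control \emph{where} the augmenting paths land among the new copies of $w$). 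So your approach buys simplicity here, while the paper's approach buys uniformity across the three lemmas.
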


\begin{proof}
Consider the two expanded graphs $G_{b,t}$ and $G_{b,t+1}$ that are obtained after setting the capacity of $w$ to, respectively, $t$ and $t+1$.
Let $F_t$ be any maximum matching of $G_{b,t}$.
By the hypothesis, $|F_t| = \mu(t)$.
Furthermore, since $G_{b,t+1}$ can be obtained from $G_{b,t}$ by adding a new vertex $w_{t+1}$ (that is a false twin of $w_1,w_2,\ldots,w_t$), $F_t$ is also a matching of $G_{b,t+1}$.
Suppose that $F_t$ is not maximum in $G_{b,t+1}$, {\it i.e.}, $\mu(t+1) = \mu(t) + r$, for some $r \geq 1$.
By Hopcroft-Karp lemma (Lemma~\ref{lem:hopcroft-karp}) there are $r$ $F_t$-augmenting paths in $G_{b,t+1}$ that are vertex-disjoint.
Furthermore, since $F_t$ is maximum in $G_{b,t}$, every such path must contain $w_{t+1}$, and so, $r=1$.
\end{proof}

\begin{lemma}\label{lem:after-2}
If $\mu( t + 2 ) = \mu( t )$ then we have $\mu(t + i) = \mu(t)$ for every $i \geq 0$.
\end{lemma}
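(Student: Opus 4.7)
The plan is to work with the expanded graphs $G_{b,t}$ and, more generally, $G_{b,t+j}$ obtained by enlarging the capacity of $w$, and to exploit the fact that the extra copies $w_{t+1},w_{t+2},\ldots,w_{t+j}$ are pairwise false twins in $G_{b,t+j}$. I would first note as a warm-up that Lemma~\ref{lem:gap} together with monotonicity of $\mu$ and the assumption $\mu(t+2)=\mu(t)$ force $\mu(t+1)=\mu(t)$; so the three values $\mu(t),\mu(t+1),\mu(t+2)$ all coincide.

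The core of the argument is to suppose towards a contradiction that there is some $j\geq 3$ with $\mu(t+j)>\mu(t)$. I would fix a maximum matching $F_0$ of $G_{b,t}$ (so $|F_0|=\mu(t)$) and a maximum matching $F_j$ of $G_{b,t+j}$, viewing both as matchings in $G_{b,t+j}$. Since $|F_j|\geq |F_0|+1$, Lemma~\ref{lem:hopcroft-karp} (Hopcroft--Karp) provides at least one $F_0$-augmenting path $P$ in $G_{b,t+j}$. Because $F_0$ is already maximum in $G_{b,t}$, $P$ cannot stay inside $G_{b,t}$, so $P$ must meet at least one of the new copies $w_{t+1},\ldots,w_{t+j}$.

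The key structural observation I would then push on is that no $w_{t+i}$ (with $i\geq 1$) can sit as an \emph{internal} vertex of $P$: internal vertices of an augmenting path are saturated by $F_0$, whereas the new copies do not even exist in $G_{b,t}$ and are therefore exposed by $F_0$ in $G_{b,t+j}$. Hence $P$ contains at most two new vertices, and they are precisely its endpoints. Using that all $w_{t+i}$ have identical neighbourhoods in $G_{b,t+j}$, I would ``compress'' $P$ to an $F_0$-augmenting path in a smaller expanded graph: if $P$ has a single new endpoint, rename it $w_{t+1}$ to obtain an augmenting path in $G_{b,t+1}$, yielding a matching of size $\mu(t)+1$ and contradicting $\mu(t+1)=\mu(t)$; if $P$ has two new endpoints, rename them $w_{t+1}$ and $w_{t+2}$ to obtain an augmenting path in $G_{b,t+2}$, contradicting $\mu(t+2)=\mu(t)$. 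Either way, the assumption $\mu(t+j)>\mu(t)$ fails, and the lemma follows.

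The main obstacle I anticipate is justifying the renaming/compression step rigorously: one must check that internal edges and vertices of $P$ are unaffected, so the compressed sequence really is a valid path in the smaller expanded graph, and that it remains $F_0$-augmenting (its endpoints are still exposed because the $w_{t+i}$'s are all exposed by $F_0$). Everything else is bookkeeping: monotonicity to reduce to the existence of a single ``bad'' augmenting path, and the twin structure of the added copies of $w$ to transport that path down to $G_{b,t+1}$ or $G_{b,t+2}$.
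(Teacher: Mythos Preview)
Your proposal is correct and follows essentially the same approach as the paper: fix a maximum matching $F_t$ of $G_{b,t}$, view it in $G_{b,t+j}$, obtain an $F_t$-augmenting path that must touch the new false-twin copies of $w$ only at its endpoints, and then relabel those endpoints to $w_{t+1}$ (and possibly $w_{t+2}$) to reach a contradiction with $\mu(t+1)=\mu(t)$ or $\mu(t+2)=\mu(t)$. The only cosmetic differences are that the paper invokes Berge's lemma rather than Hopcroft--Karp to get the single augmenting path, and the paper phrases the final contradiction as ``$\mu(t+2)>\mu(t)$'' rather than separately using $\mu(t+1)=\mu(t)$ as you do; neither changes the substance.
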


\begin{proof}
By contradiction, let $i_0$ be the least integer $i \geq 3$ such that $\mu( t + i) > \mu(t)$.
By Lemma~\ref{lem:gap} we have $\mu(t+i_0) \leq \mu(t+i_0-1) + 1 = \mu(t) + 1$, therefore $\mu(t+i_0) = \mu(t)+1$.
Consider the two expanded graphs $G_{b,t}$ and $G_{b,t+i_0}$ that are obtained after setting the capacity of $w$ to, respectively, $t$ and $t+i_0$.
Let $F_t$ be any maximum matching of $G_{b,t}$.
By the hypothesis, $|F_t| = \mu(t)$.
Furthermore, since $G_{b,t+i_0}$ can be obtained from $G_{b,t}$ by adding the new vertices $w_{t+1}, \ldots, w_{t+i_0}$ (that are false twins of $w_1,w_2,\ldots,w_t$), $F_t$ is also a matching of $G_{b,t+i_0}$.
However $F_t$ is not maximum in $G_{b,t+i_0}$ since we have $\mu(t+i_0) = \mu(t) + 1$.
By Berge lemma (Lemma~\ref{lem:berge}) there exists an $F_t$-augmenting path $P$ in $G_{b,t+i_0}$.
Furthermore, since $F_t$ is maximum in $G_{b,t}$, $P$ must contain a vertex amongst $w_{t+1}, \ldots, w_{t+i_0}$.
Note that the latter vertices are all exposed, and so, an $F_t$-augmenting path can only contain at most two of them.
Furthermore, since $w_{t+1}, \ldots, w_{t+i_0}$ are pairwise false twins, we can assume w.l.o.g. that $P$ has $w_{t+1}$ as an end.
In the same way, in case $P$ has its two ends amongst $w_{t+1}, \ldots, w_{t+i_0}$ then we can assume w.l.o.g. that the two ends of $P$ are exactly $w_{t+1},w_{t+2}$.
It implies either $\mu(t) < \mu(t+1) \leq \mu(t+2)$ or $\mu(t) = \mu(t+1) < \mu(t+2)$.
In both cases, $\mu(t+2) > \mu(t)$, that is a contradiction.
\end{proof}

\begin{lemma}\label{lem:after-1}
If $\mu( t + 1 ) = \mu( t  )$ then we have $\mu( t + 3 ) = \mu( t + 2 )$.
\end{lemma}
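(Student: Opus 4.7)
The plan is to mimic and extend the augmenting-path counting used in Lemmas~\ref{lem:gap} and \ref{lem:after-2}, with the key observation that we are now working with three new false-twin vertices rather than one or two. First I would reduce to the only nontrivial case by noting that $\mu(t+2) \in \{\mu(t), \mu(t)+1\}$ by Lemma~\ref{lem:gap}. If $\mu(t+2) = \mu(t)$ then Lemma~\ref{lem:after-2} immediately yields $\mu(t+3) = \mu(t+2)$, so the only case requiring work is $\mu(t+2) = \mu(t) + 1$, and in this case the goal becomes to rule out $\mu(t+3) = \mu(t) + 2$ (which is the only other possibility by Lemma~\ref{lem:gap}).

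So I would assume for contradiction that $\mu(t+3) = \mu(t) + 2$. Following the template of the preceding lemmas, I would pick a maximum matching $F_t$ of the expanded graph $G_{b,t}$, view $F_t$ as a matching of $G_{b,t+3}$ (since $G_{b,t+3}$ is obtained from $G_{b,t}$ by adding the three exposed false twins $w_{t+1}, w_{t+2}, w_{t+3}$ of $w_1,\ldots,w_t$), and apply the Hopcroft--Karp lemma (Lemma~\ref{lem:hopcroft-karp}) to obtain two vertex-disjoint $F_t$-augmenting paths $P_1, P_2$ in $G_{b,t+3}$. Since $F_t$ is already maximum in $G_{b,t}$, neither $P_i$ lies entirely inside $G_{b,t}$, so each must use at least one of $w_{t+1}, w_{t+2}, w_{t+3}$; and because these new vertices are exposed in $F_t$, each occurrence can only be as an endpoint of a $P_i$.

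The crux of the argument — and the step I expect to require the most care — is upgrading the previous observation to: each $P_i$ must in fact contain \emph{at least two} of the three new vertices. If instead some $P_i$ used only a single new vertex $w_{t+j}$, then by the false-twin symmetry among $w_{t+1}, w_{t+2}, w_{t+3}$ (they share the neighbourhood $N_G(w)$ in the expanded graph) I could relabel $w_{t+j}$ as $w_{t+1}$ and view $P_i$ as an $F_t$-augmenting path in $G_{b,t+1}$; augmenting along it would give a matching of size $\mu(t) + 1$ in $G_{b,t+1}$, contradicting the hypothesis $\mu(t+1) = \mu(t)$. This is precisely the same false-twin manoeuvre used in the proofs of Lemmas~\ref{lem:gap} and \ref{lem:after-2}, just applied to a smaller sub-instance.

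Once this is established, the contradiction is immediate by a simple counting argument: $P_1$ and $P_2$ are vertex-disjoint, so together they use at least $2+2 = 4$ of the new vertices $w_{t+1}, w_{t+2}, w_{t+3}$, whereas only three exist. Hence $\mu(t+3) \leq \mu(t)+1 = \mu(t+2)$, and combined with monotonicity this yields the claimed equality $\mu(t+3) = \mu(t+2)$.
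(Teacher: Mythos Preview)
Your proposal is correct and follows essentially the same approach as the paper: reduce to the case $\mu(t+2)=\mu(t)+1$ via Lemma~\ref{lem:after-2}, assume $\mu(t+3)=\mu(t)+2$, apply Lemma~\ref{lem:hopcroft-karp} to get two vertex-disjoint $F_t$-augmenting paths in $G_{b,t+3}$, and exploit that a path touching only one of the three new false twins would already be an $F_t$-augmenting path in $G_{b,t+1}$, contradicting $\mu(t+1)=\mu(t)$. The only cosmetic difference is the order of the final two steps: the paper uses pigeonhole first (two disjoint paths, three new vertices, so some path hits only one) and then derives the contradiction, whereas you first argue each path must hit at least two and then count $2+2>3$; these are the same argument read in opposite directions.
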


\begin{proof}
If $\mu( t + 2 ) = \mu( t  )$ then the result follows from Lemma~\ref{lem:after-2} directly.
Thus, we assume from now on that $\mu( t + 2 ) > \mu( t  )$.
By Lemma~\ref{lem:gap} we have $\mu(t+2) \leq \mu(t+1) + 1 = \mu(t) + 1$, therefore $\mu(t+2) = \mu(t) + 1$.
Suppose by contradiction $\mu( t + 3 ) > \mu( t + 2 )$.
Again by Lemma~\ref{lem:gap} we get $\mu(t+3) = \mu(t+2) + 1$, therefore $\mu(t+3) = \mu(t) + 2$.
Consider the two expanded graphs $G_{b,t}$ and $G_{b,t+3}$ that are obtained after setting the capacity of $w$ to, respectively, $t$ and $t+3$.
Let $F_t$ be any maximum matching of $G_{b,t}$.
By the hypothesis, $|F_t| = \mu(t)$.
Furthermore, since $G_{b,t+3}$ can be obtained from $G_{b,t}$ by adding the three new vertices $w_{t+1}, w_{t+2}, w_{t+3}$ (that are false twins of $w_1,w_2,\ldots,w_t$), $F_t$ is also a matching of $G_{b,t+3}$.
However $F_t$ is not maximum in $G_{b,t+3}$ since we have $\mu(t+3) = \mu(t) + 2$.
By Hopcroft-Karp lemma (Lemma~\ref{lem:hopcroft-karp}) there exist $2$ $F_t$-augmenting paths in $G_{b,t+3}$ that are vertex-disjoint.
Furthermore, since $F_t$ is maximum in $G_{b,t}$, every such path must contain a vertex amongst $w_{t+1}, w_{t+2}, w_{t+3}$.
Note that the latter vertices are all exposed, and so, an $F_t$-augmenting path can only contain at most two of them.
In particular, there is at least one of these two $F_t$-augmenting paths, denoted by $P$, that only contains a single vertex amongst $w_{t+1}, w_{t+2}, w_{t+3}$.
W.l.o.g., since $w_{t+1}, w_{t+2}, w_{t+3}$ are pairwise false twins, we can assume that $w_{t+1}$ is an end of $P$.
However, it implies that $P$ is also an $F_t$-augmenting path in $G_{b,t+1}$, and so, that $\mu(t+1) > \mu(t)$, that is a contradiction.
\end{proof}

We are now ready to prove the main result of this section:

\begin{proposition}\label{prop:monotonic}
There exist integers $c_1,c_2$ such that:
$$\mu(t) = \begin{cases}
\mu(0) + t \ \mbox{if} \ t \leq c_1 \\
\mu(c_1) + \left\lfloor (t-c_1)/2 \right\rfloor = \mu(0) + c_1 +  \left\lfloor (t-c_1)/2 \right\rfloor \ \mbox{if} \ c_1 < t \leq c_1 + 2c_2 \\
\mu(c_1+2c_2) = \mu(0) + c_1 + c_2 \ \mbox{otherwise}.
\end{cases}$$
Furthermore, the triple $(\mu(0),c_1,c_2)$ can be computed in ${\cal O}(nm\log^2{n}\log{||b||_1})$-time.
\end{proposition}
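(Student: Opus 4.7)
The plan has two parts: first I establish the three-piece piecewise-linear shape of $\mu$ by combining Lemmas~\ref{lem:gap}, \ref{lem:after-1} and \ref{lem:after-2}; then I describe a dichotomic-search procedure that invokes Lemma~\ref{lem:b-matching} only a logarithmic number of times.

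For the structural part, I would define $c_1$ as the largest $t \geq 0$ with $\mu(t) = \mu(0) + t$; this is finite since $\mu$ is bounded (by $||b||_1/2$) and has increments at most one by Lemma~\ref{lem:gap}. If $c_1$ is already the point of stabilization then the first case of the statement accounts for everything, so I assume otherwise: monotonicity combined with Lemma~\ref{lem:gap} and the maximality of $c_1$ force $\mu(c_1 + 1) = \mu(c_1)$. A routine induction on $i \geq 0$, applying Lemma~\ref{lem:after-1} at parameter $t = c_1 + 2i$, then yields $\mu(c_1 + 2i + 1) = \mu(c_1 + 2i)$ for every $i$, i.e.\ every odd-offset increment past $c_1$ vanishes. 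Lemma~\ref{lem:gap} then forces the remaining even-offset increments $\mu(c_1 + 2i + 2) - \mu(c_1 + 2i)$ to lie in $\{0, 1\}$, and I let $c_2$ be the least $i \geq 0$ for which this even increment equals $0$. Telescoping gives $\mu(c_1 + 2i) = \mu(c_1) + i$ for $0 \leq i \leq c_2$ together with $\mu(c_1 + 2c_2 + 2) = \mu(c_1 + 2c_2)$, so Lemma~\ref{lem:after-2} applied at $t = c_1 + 2c_2$ makes $\mu$ constant from $c_1 + 2c_2$ onwards. Rewriting with the standard identities for $\lfloor (t-c_1)/2 \rfloor$ recovers the three formulas of the statement.

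For the algorithmic part, I would call the subroutine of Lemma~\ref{lem:b-matching} once with $b_w = 0$ to compute $\mu(0)$, and once with $b_w := T = \sum_{v \neq w} b_v$ (larger than any useful capacity of $w$) to obtain the stable value $M = \mu(0) + c_1 + c_2$. Since $t \mapsto \mu(t) - t$ is non-increasing (directly from Lemma~\ref{lem:gap}) and equals $\mu(0)$ precisely on $[0, c_1]$, I would locate $c_1$ by a binary search over $t \in [0, M]$ using Lemma~\ref{lem:b-matching} as the query oracle; as $M \leq ||b||_1/2$, this uses only $O(\log ||b||_1)$ probes. I then recover $c_2 = M - \mu(0) - c_1$ in closed form from the third case. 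The overall cost is dominated by the $O(\log ||b||_1)$ invocations of Lemma~\ref{lem:b-matching} at $O(nm \log^2 n)$ each, matching the claimed bound. The main obstacle in this plan is the structural step above, which requires carefully propagating Lemma~\ref{lem:after-1} by induction to rule out any non-standard pattern of increments past $c_1$, and checking that the very first zero even-offset increment already triggers Lemma~\ref{lem:after-2}; once the shape of $\mu$ is pinned down, the algorithm itself is a textbook dichotomic search.
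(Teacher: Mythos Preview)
Your proposal is correct and follows essentially the same route as the paper: you define $c_1$ identically, propagate Lemma~\ref{lem:after-1} by induction to kill all odd-offset increments past $c_1$, use Lemma~\ref{lem:gap} to force the even-offset increments into $\{0,1\}$, and invoke Lemma~\ref{lem:after-2} at the first zero even increment to freeze $\mu$ --- this is exactly the paper's argument, only organized around $c_2$ directly rather than via an auxiliary $t_{\max}$. The one mild difference is algorithmic: the paper runs two separate exponential-then-binary searches (one for $c_1$, one for $c_2$), whereas you evaluate the stable value $M=\mu(T)$ once, binary-search for $c_1$ using the monotone predicate $\mu(t)-t=\mu(0)$, and recover $c_2=M-\mu(0)-c_1$ arithmetically; this saves a search but is the same idea and the same $O(\log\|b\|_1)$ query bound (your ``$M\le\|b\|_1/2$'' should read $M\le\|b\|_1$ once $b_w=T$ is accounted for, but this does not affect the asymptotics).
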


\begin{figure}[h!]
\centering
\includegraphics[width=.8\textwidth]{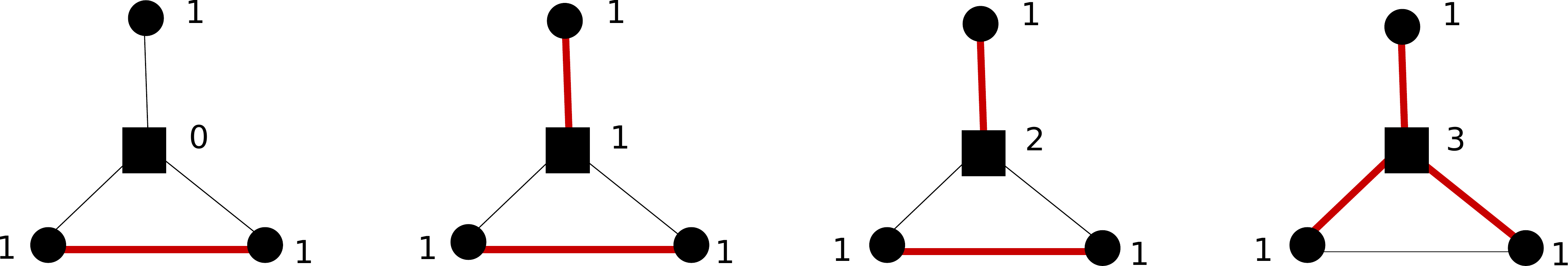}
\caption{An example with $(\mu(0),c_1,c_2) = (1,1,1)$. Vertices are labeled with their capacity. Thin and bold edges have respective weights $0$ and $1$.}
\label{fig:sreduction}
\end{figure}

\begin{proof}
Let $c_1$ be the maximum integer $t$ such that $\mu(t) = \mu(0) + t$.
This value is well-defined since, trivially, the function $\mu$ must stay constant whenever $t \geq \sum_{v \in N_G(w)} b_v$.
Furthermore, by Lemma~\ref{lem:gap} we have $\mu(t) = \mu(0) + t$ for every $0 \leq t \leq c_1$.
Then, let $t_{\max}$ be the least integer $t$ such that, for every $i \geq 0$ we have $\mu(t_{\max}+i) = \mu(t_{\max})$.
Again, this value is well-defined since we have the trivial upper-bound $t_{\max} \leq \sum_{v \in N_G(w)} b_v$.
Furthermore, since $\mu$ is strictly increasing on $[|0;c_1|]$, we have $t_{\max} \geq c_1$.
Let $c_2' = t_{\max} - c_1$.
We claim that $c_2' = 2c_2$ is even.
For that, we need to observe that $\mu(c_1) = \mu(c_1+1)$ by maximality of $c_1$.
By Lemma~\ref{lem:after-1} we get $\mu(c_1+2i) = \mu(c_1+2i+1)$ for every $i \geq 0$, thereby proving the claim.
Moreover, for every $0 \leq i < c_2$ we have by Lemma~\ref{lem:after-2} $\mu(c_1 + 2i) < \mu(c_1 + 2(i+1))$ (since otherwise $t_{\max} \leq c_1 + 2i$).
By Lemma~\ref{lem:after-1} we have $\mu(c_1+2i) = \mu(c_1+2i+1)$.
Finally, by Lemma~\ref{lem:gap} we get $\mu(c_1+2(i+1)) \leq \mu(c_1+2i+1) +1 = \mu(c_1+2i)+1$, therefore $\mu(c_1+2(i+1)) = \mu(c_1+2i) + 1$.
Altogether combined, it implies that $\mu(c_1+2i) = \mu(c_1+2i+1) = \mu(c_1) + i$ for every $0 \leq i \leq c_2$, that proves the first part of our result.

We can compute $\mu(0)$ with any $b$-{\sc Matching} algorithm after we set the capacity of $w$ to $0$.
Furthermore, the value of $c_1$ can be computed within ${\cal O}(\log c_1)$ calls to a $b$-{\sc Matching} algorithm, as follows.
Starting from $c_1' = 1$, at every step we multiply the current value of $c_1'$ by $2$ until we reach a value $c_1' > c_1$ such that $\mu(c_1') < \mu(0) + c_1'$.
Then, we perform a dichotomic search between $0$ and $c_1'$ in order to find the largest value $c_1$ such that $\mu(c_1) = \mu(0) + c_1$.
Once $c_1$ is known, we can use a similar approach in order to compute $c_2$.
Overall, since $c_1 + 2c_2 = t_{\max} \leq \sum_{v \in N_G(w)} b_v = {\cal O}(||b||_1)$, we are left with ${\cal O}(\log{||b||_1})$ calls to any $b$-{\sc Matching} algorithm.
Therefore, by Lemma~\ref{lem:b-matching}, we can compute the triple $(\mu(0),c_1,c_2)$ in ${\cal O}(nm\log^2{n}\log{||b||_1})$-time.
\end{proof}

\section{The algorithm}\label{sec:algo}

We present in this section a quasi linear-time algorithm for computing a maximum-cardinality $b$-matching on any bounded split-width graph (Theorem~\ref{thm:main-result}).
Given a graph $G$, our algorithm takes as input the split decomposition tree $T$ of any minimal split decomposition of $G$.
We root $T$ in an arbitrary component $C_1$.
Then, starting from the leaves, we compute by dynamic programming on $T$ the {\it cardinality} of an optimal solution.
This first part of the algorithm is involved, and it uses the results of Section~\ref{sec:var-1-vertex}.
It is based on a new reduction rule that we introduce in Definition~\ref{def:reduction-rule}.
Finally, starting from the root component $C_1$, we compute a maximum-cardinality $b$-matching of $G,b$ by reverse dynamic programming on $T$.
This second part of the algorithm is simpler than the first one, but we need to carefully upper-bound its time complexity.
In particular, we also need to ensure that some additional property holds for the $b$-matchings we compute at every component.

\subsection*{Reduction rule}

\begin{definition}\label{def:reduction-rule}
Let $G = (V,E),b$ be any instance of $b$-{\sc Matching}.
For any split $(U,W)$ of $G$ let $C = N_G(W) \subseteq U, \ D = N_G(U) \subseteq W$.
Furthermore, let $G_U = (U \cup \{w\}, E_U), \ G_W = (W \cup \{u\}, E_W)$ be the corresponding subgraphs of $G$.
We define the pairs $G_U,b^U$ and $H_W,b^W$ as follows:

\begin{figure}[h!]
\centering
\includegraphics[width=.4\textwidth]{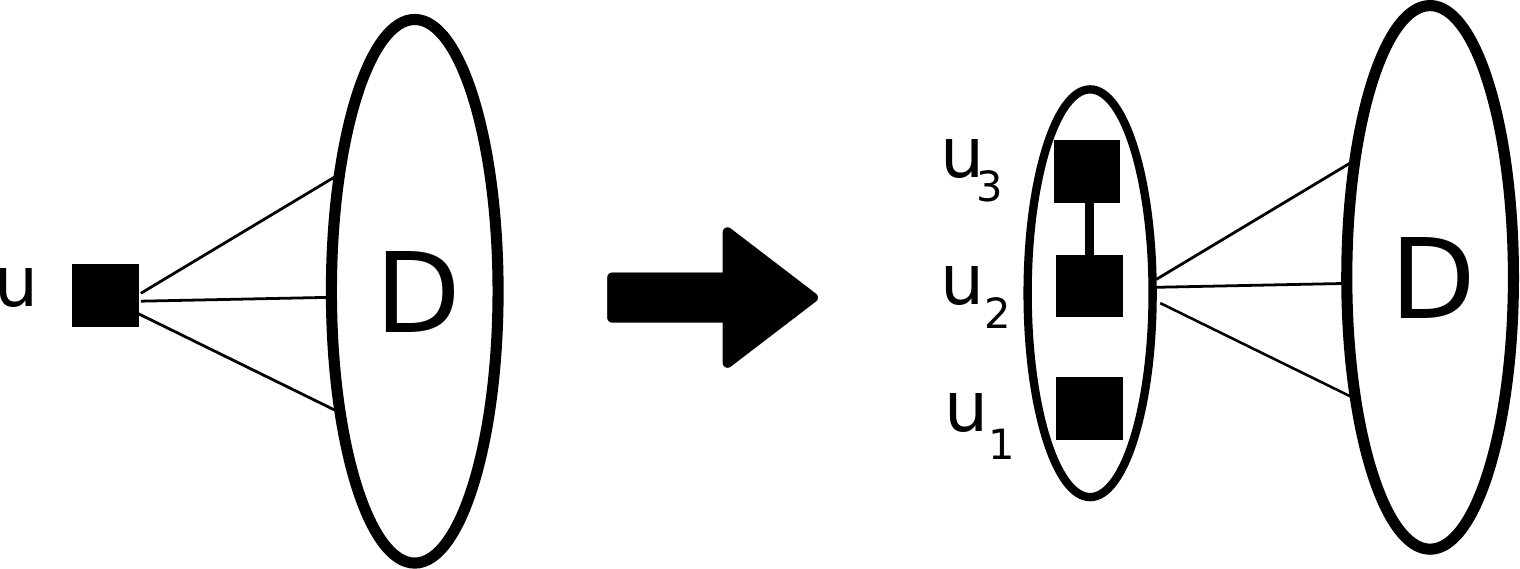}
\caption{The reduction of Definition~\ref{def:reduction-rule}.}
\label{fig:sreduction}
\end{figure}

\begin{itemize}
\item For every $v \in U$ we set $b^U_v = b_v$; the capacity of the split marker vertex $w$ is left unspecified.
Let the triple $(\mu^U(0),c_1^U,c_2^U)$ be as defined in Proposition~\ref{prop:monotonic} w.r.t. $G_U,b^U$ and $w$. 
\item The {\em auxiliary graph} $H_W$ is obtained from $G_W$ by replacing the split marker vertex $u$ by a module $M_u = \{u_1,u_2,u_3\}, \ N_{H_W}(M_u) = N_{G_W}(u) = D$; we also add an edge between $u_2,u_3$ (cf. Fig.~\ref{fig:sreduction}). 
For every $v \in W$ we set $b^W_v = b_v$; we set $b^W_{u_1} = c_1^U, \ b^W_{u_2} = b^W_{u_3} = c_2^U$.
\end{itemize}
\end{definition}

We will show throughout this section that our gadget somewhat encodes all the partial solutions for side $U$.
For that, we start proving the following lemma:

\begin{lemma}\label{lem:reduction-rule-1}
Let $x$ be a $b$-matching for $G,b$.
There exists a $b$-matching $x^W$ for $H_W,b^W$ such that $||x^W||_1 \geq ||x||_1 + c_2^U - \mu^U(0)$.
\end{lemma}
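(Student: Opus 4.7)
The plan is to keep the restriction of $x$ to $G_W[W]$ verbatim, and repackage the total split-edge weight $t := \sum_{c \in C,\, d \in D} x_{\{c,d\}}$ as a distribution over the three gadget vertices $u_1, u_2, u_3$, then exploit the ``free'' edge $\{u_2, u_3\}$ of the gadget to pad the cardinality up to what the lemma asks for. The whole idea is that the extra $c_2^U - \mu^U(0)$ units come from loading $\{u_2,u_3\}$ whenever $u_2,u_3$ are not fully used by their edges to $D$.

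I first decompose $x$ along the split $(U,W)$. Aggregating the split edges onto the marker $w$, i.e.\ setting $y_{\{c,w\}} = \sum_{d \in D} x_{\{c,d\}}$ and $y_e = x_e$ for every $e$ internal to $G[U]$, gives a $b^U$-matching $y$ of $G_U$ whose $y$-weight at $w$ equals $t$; by Proposition~\ref{prop:monotonic}, $||y||_1 \leq \mu^U(t)$. Symmetrically, aggregating on the $W$-side gives a $b$-matching $z$ of $G_W$ whose $z$-weight at $u$ equals $t$. A direct count yields $||x||_1 = ||y||_1 + ||z||_1 - t$.

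To define $x^W$ I set $x^W_e = z_e$ for every $e$ internal to $G_W[W]$ and choose the gadget weights according to where $t$ falls. If $t \leq c_1^U$, I push all split weight through $u_1$ and set $s := x^W_{\{u_2,u_3\}} = c_2^U$. If $c_1^U < t \leq c_1^U + 2c_2^U$, I saturate $u_1$, split the remaining $t - c_1^U$ units between $u_2$ and $u_3$ as evenly as possible, and set $s = c_2^U - \lceil (t-c_1^U)/2 \rceil$. If $t > c_1^U + 2c_2^U$, I saturate each of $u_1, u_2, u_3$ and set $s = 0$. In every case the per-$d$ weights $x^W_{\{u_i,d\}}$ exist by a routine bipartite transportation argument (total demand on the $u_i$-side is at most $t = \sum_d z_{\{u,d\}}$), and the $b_d$-constraints of $H_W,b^W$ carry over because I never send more than $z_{\{u,d\}}$ units in total to any $d$.

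The verification is a short case check: in each of the three regimes, using the three-piece formula of Proposition~\ref{prop:monotonic} together with the identity $\lceil r/2 \rceil + \lfloor r/2 \rfloor = r$, the sum $\sum_{i,d} x^W_{\{u_i,d\}} + s$ equals exactly $\mu^U(t) + c_2^U - \mu^U(0)$. Since $||x^W||_1 = ||z||_1 - t + \sum_{i,d} x^W_{\{u_i,d\}} + s$ and $||y||_1 \leq \mu^U(t)$, substituting yields $||x^W||_1 \geq ||x||_1 + c_2^U - \mu^U(0)$, as required. The main obstacle I expect is the third regime $t > c_1^U + 2c_2^U$: there one is forced to route strictly fewer than $t$ units through $\{u_1,u_2,u_3\}$, so part of $z$'s $u$-weight is deliberately discarded and the $b_d$-constraints must be rechecked; the parity bookkeeping in the middle regime is a milder but necessary wrinkle.
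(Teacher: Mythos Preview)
Your proof is correct and follows essentially the same strategy as the paper: decompose $x$ along the split, bound the $U$-side contribution by $\mu^U(t)$, keep the $W$-internal part of $x$ verbatim, and route the aggregated split-weight through the gadget $M_u$ via a transportation argument, padding with the edge $\{u_2,u_3\}$.

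There is one small technical difference worth noting. The paper does not case-split on $t$ directly; instead it passes to an \emph{optimal} $b$-matching $y^U$ of $G_U$ (with capacity $b_w^U=t$) whose $w$-degree $d$ is minimal. By Proposition~\ref{prop:monotonic} this forces $d\le c_1^U+2c_2^U$ and, when $d>c_1^U$, that $d-c_1^U$ is even. Thus the paper only needs two cases and never faces a parity wrinkle or an ``overflow'' regime. Your version works with $t$ itself, which is arguably more elementary (no auxiliary optimal matching is invoked) but costs you the third case $t>c_1^U+2c_2^U$ and the $\lceil\cdot\rceil/\lfloor\cdot\rfloor$ bookkeeping in the middle regime. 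Both routes arrive at exactly the same bound, and your identity $\sum_{i,d}x^W_{\{u_i,d\}}+s=\mu^U(t)+c_2^U-\mu^U(0)$ in each regime is the clean way to see it.
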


\begin{proof}
See Fig.~\ref{fig:reduction-1} for an illustration.
Let us define $b^U_w = \sum_{e \in C \times D} x_e$.
As an intermediate step, we can construct a $b$-matching $x^U$ of the pair $G_U, b^U$ as follows.
First we keep all the edge-weights $x_e = x^U_e$, for every $e \in E(U)$.
Then, for every $v \in C$ we set $x^U_{\{v,w\}} = \sum_{v' \in D} x_{\{v,v'\}}$.
We deduce from this transformation that: $$||x^U||_1 = \sum\limits_{e \in E(U) \cup (C \times D)} x_e \leq \mu^U(b_w^U).$$
In particular, let $y^U$ be a $b$-matching of $G_U, b^U$ of optimal cardinality $\mu^U(b_w^U)$ and such that $d = deg_{y^U}(w) = \sum_{e \in E_w(G_U)} y^U_e$ is minimized. 
By Proposition~\ref{prop:monotonic}, we have $d \leq c_1^U + 2c_2^U$.
We obtain a $b$-matching $x^W$ for the pair $H_W,b^W$ as follows:

\begin{figure}[h!]
\centering
\includegraphics[width=.6\textwidth]{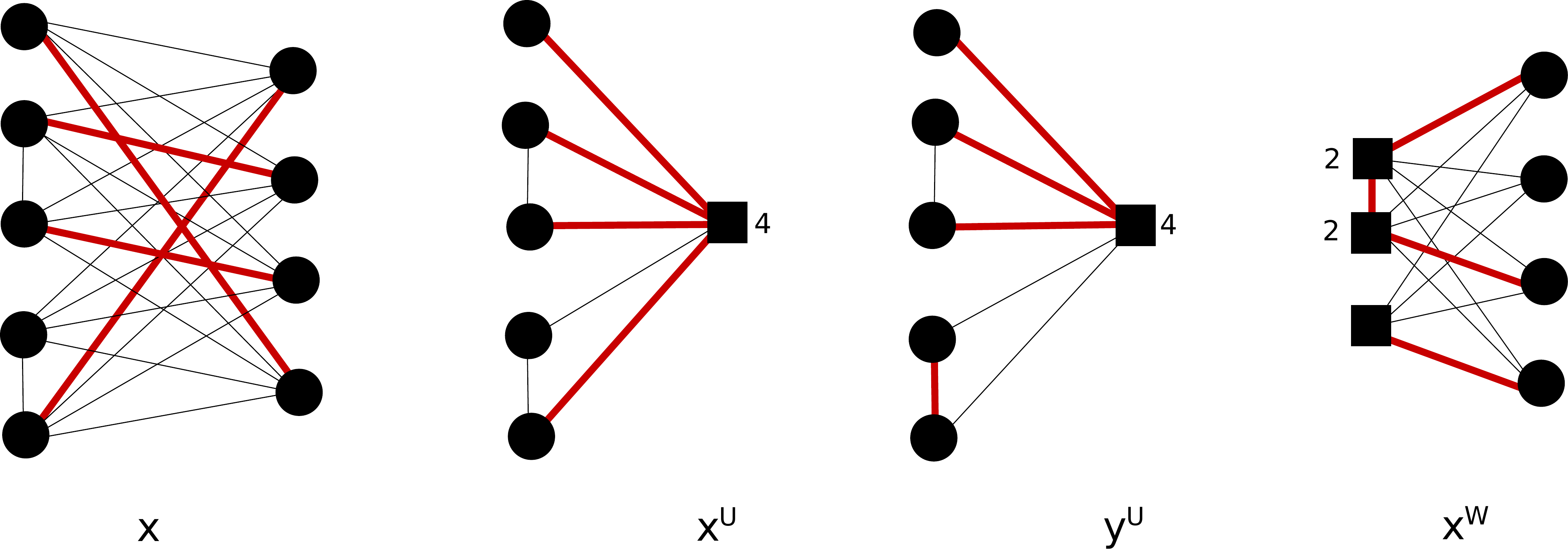}
\caption{The construction of $x^W$. Vertices with capacity greater than $1$ are labeled with their capacity. Thin and bold edges have respective weights $0$ and $1$.}
\label{fig:reduction-1}
\end{figure}

\begin{itemize}
\item We keep all the edge-weights $x_e = x^W_e$, for every $e \in E(W)$.
Doing so, we get an initial $b$-matching of cardinality $||x||_1 - ||x^U||_1 \geq ||x||_1 - \mu^U(b_w^U)$;
\item Then, in order to define $x_e^W$, for every edge $e$ that is incident to $u_1$, we make a simple reduction to a flow problem.
Namely, consider a star with leaves $D$ and central node the split marker vertex $u$.
The star is node-capacitated, with the capacity being set to: $\min\{d,c_1^U\}$ for $u$; and $\sum_{v \in C} x_{\{v,v'\}}$ for every $v' \in D$.
By construction, the capacities on side $D$ sum to $b_w^U \geq d$.
So, we can send exactly $\min\{d,c_1^U\}$ units of flow from $D$ to $u$.
Furthermore, we define $x^W_{\{u_1,v'\}}$, for every $v' \in D$, as the value of the flow passing through the arc $(v',u)$.
Doing so, we get a partial solution of cardinality $\geq ||x||_1 - \mu^U(b_w^U) + \min\{d,c_1^U\}$.
\item We end up defining the weights of the edges that are incident to $u_2$ or $u_3$.
There are two cases:
\begin{itemize}
\item{\bf Case $d \leq c_1^U$.}
By Proposition~\ref{prop:monotonic}, $\mu^U(b_w^U) - d = \mu^U(0)$.
We complete the $b$-matching by setting $x^W_{\{u_2,u_3\}} = c_2^U$.
\item{\bf Case $d > c_1^U$.}
By Proposition~\ref{prop:monotonic} and the minimality of $d$, we have that $d - c_1^U = 2d'$ is even, $d' \leq c_2^U$ and $\mu^U(b_w^U) - c_1^U = \mu^U(0) + d'$.
We make two other reductions to a flow problem, on the same star as before but with different node capacities.
More precisely, we set the capacity of $u$ to $d'$, while for every $v' \in D$, we decrease its original capacity by $x^W_{\{u_1,v'\}}$.
By construction, the capacities on side $D$ now sum to $b_w^U - c_1^U \geq d - c_1^U \geq 2d'$.
So, we can send exactly $d'$ units of flow from $D$ to $u$.
Furthermore, we define $x^W_{\{u_2,v'\}}$, for every $v' \in D$, as the value of the flow passing through the arc $(v',u)$.

Then, we again decrease the node capacity for every $v' \in D$, by exactly $x^W_{\{u_2,v'\}}$.
We send $d'$ more units of flow from $D$ to $u$.
For every $v' \in D$, we define $x^W_{\{u_3,v'\}}$ as the additional amount of flow being sent through the arc $(v',u)$.
Note that, doing so, we get that $\sum_{v' \in D} x^W_{\{u_2,v'\}} = \sum_{v' \in D} x^W_{\{u_3,v'\}} = d'$.
Finally, we set $x_{\{u_2,u_3\}}^W = c_2^U - d'$.
In total, the cardinality of the $b$-matching has so increased by $2d' + (c_2^U - d') = c_2^U + d'$.
\end{itemize}
Therefore in both cases, the resulting $b$-matching has cardinality at least $||x||_1 + c_2^U - \mu^U(0)$.
\end{itemize}
\end{proof}

In fact, the converse of Lemma~\ref{lem:reduction-rule-1} also holds: if $x^W$ is a $b$-matching for $H_W,b^W$ then there exists a $b$-matching $x$ for the pair $G,b$ such that $||x||_1 \geq ||x^W||_1 - c_2^U + \mu^U(0)$ (Proposition~\ref{prop:reduction-rule}).
We postpone the proof of the converse inequality since, in order to prove it, we first need to prove intermediate lemmata that will be also used in the proof of Theorem~\ref{thm:main-result}.

\subsection*{$b$-matchings with additional properties}

We consider an intermediate modification problem on the $b$-matchings of some ``auxiliary graphs'' that we define next.
Let $C_i$ be a split component in a given split decomposition of $G$.
The subgraph $C_i$ is obtained from a sequence of simple decompositions.
For a given subsequence of the above simple decompositions (to be defined later) we apply the reduction rule of Definition~\ref{def:reduction-rule}.
Doing so, we obtain a pair $H_i,b^i$ with $H_i$ being a supergraph of $C_i$ obtained by replacing some split marker vertices $u_{i_t}, \ 1 \leq t \leq l$, by the modules $M_{i_t} = \{u_{i_t}^1,u_{i_t}^2,u_{i_t}^3\}$.
We recall that $u_{i_t}^2,u_{i_t}^3$ are adjacent and they have the same capacity.

\smallskip
We seek for a maximum-cardinality $b$-matching $x^i$ for the pair $H_i,b^i$ such that the following properties hold for every $1 \leq t \leq l$:
\begin{itemize}
\item ({\bf symmetry}) $deg_{x^i}(u_{i_t}^2) = deg_{x^i}(u_{i_t}^3)$. 
\item ({\bf saturation}) if $deg_{x^i}(u_{i_t}^1) < c_{i_t}^1$ then, $deg_{x^i}(u_{i_t}^2) = x^i_{\{u_{i_t}^2,u_{i_t}^3\}}$.
\end{itemize} 

We prove next that for every fixed $t$, any $x^i$ can be processed in ${\cal O}(|E_{u_{i_t}}(C_i)|)$-time so that both the saturation property and the symmetry property hold for $M_{i_t}$.
However, ensuring that these two above properties hold {\em simultaneously} for every $t$ happens to be trickier.

\begin{lemma}\label{lem:simultaneous}
In ${\cal O}(|V(H_i)| \cdot |E(H_i)| \cdot \log^2{|V(H_i)|})$-time, we can compute a maximum-cardinality $b$-matching $x^i$ for the pair $H_i,b^i$ such that both the saturation property and the symmetry property hold for every $M_{i_t}, \ 1 \leq t \leq l$.
\end{lemma}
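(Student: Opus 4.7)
The plan is to reduce the whole problem to a single invocation of {\sc Maximum-Cost} $b$-{\sc Matching} (Problem~\ref{prob:maxcost-b-matching}), with edge costs chosen so that both desired properties fall out of the optimality conditions. I would define $c : E(H_i) \to \mathbb{N}$ by setting $c(e) = 1$ on every edge incident to some $u_{i_t}^1$ as well as on every internal edge $\{u_{i_t}^2, u_{i_t}^3\}$, and $c(e) = 0$ on every other edge. By Lemma~\ref{lem:b-matching}, I can compute in $\mathcal{O}(|V(H_i)| \cdot |E(H_i)| \cdot \log^2 |V(H_i)|)$-time a $b$-matching $x^i$ of $(H_i, b^i)$ that has maximum cardinality and, subject to this, maximum cost $c \cdot x^i$. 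The rest of the proof is to show that this single $x^i$ already satisfies both the saturation and the symmetry property for every module $M_{i_t}$, so that no post-processing is required.

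For saturation, fix $t$ and suppose toward a contradiction that $deg_{x^i}(u_{i_t}^1) < c_{i_t}^1$ while $x^i_{\{u_{i_t}^2, v\}} \geq 1$ for some $v \in D_t$. Let $y$ be obtained from $x^i$ by decreasing $x^i_{\{u_{i_t}^2, v\}}$ by one and increasing $x^i_{\{u_{i_t}^1, v\}}$ by one. Then $deg_y(v)$ is unchanged, $deg_y(u_{i_t}^2)$ decreases, and $deg_y(u_{i_t}^1) \leq c_{i_t}^1$ thanks to the slack hypothesis, so $y$ is again a $b$-matching. Moreover $||y||_1 = ||x^i||_1$ while $c \cdot y = c \cdot x^i + 1$, contradicting the maximality of $c \cdot x^i$. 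Hence $x^i_{\{u_{i_t}^2, v\}} = 0$ for every $v \in D_t$ whenever $u_{i_t}^1$ is unsaturated, which is exactly the saturation property.

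For symmetry, the pivot claim is: if $x^i_{\{u_{i_t}^2, v\}} \geq 1$ for some $v \in D_t$ then $deg_{x^i}(u_{i_t}^3) = c_{i_t}^2$. Otherwise, rerouting one unit from $\{u_{i_t}^2, v\}$ to the internal edge $\{u_{i_t}^2, u_{i_t}^3\}$ leaves $deg(u_{i_t}^2)$ unchanged, raises $deg(u_{i_t}^3)$ by one while still fitting inside $c_{i_t}^2$, decreases $deg(v)$, preserves cardinality, and increases the cost by one---a contradiction. The symmetric claim with the twins swapped holds for the same reason. I would then split on three cases: (i)~neither twin uses an external edge of positive weight, in which case both degrees equal $x^i_{\{u_{i_t}^2, u_{i_t}^3\}}$; (ii)~both twins use some external edge, in which case the two pivot claims force both degrees to equal $c_{i_t}^2$; (iii)~only one twin, say $u_{i_t}^2$, uses an external edge, in which case the pivot claim yields $deg_{x^i}(u_{i_t}^3) = c_{i_t}^2$, yet since $u_{i_t}^3$ uses no external edge this forces $x^i_{\{u_{i_t}^2, u_{i_t}^3\}} = c_{i_t}^2$, so $deg_{x^i}(u_{i_t}^2) \geq 1 + c_{i_t}^2 > c_{i_t}^2$, contradicting the capacity constraint and ruling case~(iii) out. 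The delicate step is exactly this last case analysis: a naive approach would try to rebalance the twins with an explicit post-processing pass whose amortized cost would have to be tracked, but the interplay between the $u_{i_t}^2$-capacity constraint and the cost-based pull toward the internal edge $\{u_{i_t}^2, u_{i_t}^3\}$ is precisely what forbids any asymmetric configuration already in the Max-Cost optimum.
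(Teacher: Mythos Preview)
Your cost assignment breaks down once two of the modules $M_{i_t}$ and $M_{i_{t'}}$ are adjacent in $H_i$, which happens whenever the corresponding split marker vertices $u_{i_t}$ and $u_{i_{t'}}$ are adjacent in $C_i$ (e.g.\ when $C_i$ is a triangle). In that situation an edge of the form $\{u_{i_t}^2,\,u_{i_{t'}}^1\}$ is \emph{incident to some $u^1$} and therefore receives cost~$1$ in your scheme, not cost~$0$. Your saturation argument then no longer yields a contradiction: rerouting one unit from $\{u_{i_t}^2,\,u_{i_{t'}}^1\}$ to $\{u_{i_t}^1,\,u_{i_{t'}}^1\}$ keeps the cost unchanged, and the same failure hits the pivot claim for symmetry when you try to reroute from $\{u_{i_t}^2,\,u_{i_{t'}}^1\}$ to the internal edge $\{u_{i_t}^2,u_{i_t}^3\}$. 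Concretely, take $H_i$ to be the complete join of $M_{i_1}$ (with $c_1^{i_1}=100$, $c_2^{i_1}=0$) and $M_{i_2}$ (with $c_1^{i_2}=1$, $c_2^{i_2}=100$): every edge present is either internal or incident to $u_{i_1}^1$ or $u_{i_2}^1$, so \emph{all} edges get cost~$1$ and maximum cost coincides with maximum cardinality. The assignment $x_{\{u_{i_1}^1,u_{i_2}^2\}}=x_{\{u_{i_1}^1,u_{i_2}^3\}}=x_{\{u_{i_2}^2,u_{i_2}^3\}}=50$ is then a legitimate output of the max-cost routine, yet $deg(u_{i_2}^1)=0<1$ while $u_{i_2}^2$ carries external weight~$50$, so saturation fails.

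The paper sidesteps this by using a different cost function (unit cost everywhere except cost~$2$ on each internal edge $\{u_{i_t}^2,u_{i_t}^3\}$) \emph{together with} a short post-processing pass. The cost function alone does not enforce saturation or exact symmetry; it only guarantees that the one rule which can disturb a neighbouring module (their Rule~3, which trades an external unit for an internal one and hence would strictly increase cost) is never invoked. The remaining rules preserve every $deg_{x^i}(v')$ for $v'\in N_{H_i}(M_{i_t})$, so a single sweep over the modules suffices. If you want to avoid post-processing entirely, your cost function has to distinguish the internal edges from edges incident to a $u^1$ (e.g.\ weight~$3$ on internal edges, weight~$1$ on edges touching a $u^1$, weight~$0$ elsewhere), and the case analysis must then be redone with that hierarchy in mind.
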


\begin{proof}
Let $x^i$ be some initial maximum-cardinality $b$-matching (to be defined later).
While there exists a $t$ such that the saturation property or the symmetry property does not hold for $M_{i_t}$, we keep repeating the following rules until none of them can be applied:
\begin{itemize}

\item{\bf Rule 1}. Suppose $deg_{x^i}(u_{i_t}^1) < c_{i_t}^1$ and there exists $v' \in N_{H_i}(M_{i_t})$ such that $x^i_{\{u_{i_t}^2,v'\}} > 0$ (resp., $x^i_{\{u_{i_t}^3,v'\}} > 0$). Then, we increase $x^i_{\{u_{i_t}^1,v'\}}$ as much as we can, that is by exactly $\min\{c_{i_t}^1 - deg_{x^i}(u_{i_t}^1), x^i_{\{u_{i_t}^2,v'\}}\}$ (resp., $\min\{c_{i_t}^1 - deg_{x^i}(u_{i_t}^1), x^i_{\{u_{i_t}^3,v'\}}\}$), and we decrease $x^i_{\{u_{i_t}^2,v'\}}$ (resp., $x^i_{\{u_{i_t}^3,v'\}}$) by exactly the same amount.
By repeating this step until it is no more possible to do so, we ensure that the saturation property holds for $M_{i_t}$.

\item{\bf Rule 2}. Suppose $deg_{x^i}(u_{i_t}^2) > deg_{x^i}(u_{i_t}^3) + 1$ (the case $deg_{x^i}(u_{i_t}^3) > deg_{x^i}(u_{i_t}^2) + 1$ is symmetrical to this one). Let $v' \in N_{H_i}(M_{i_t})$ such that $x^i_{\{u_{i_t}^2,v'\}} > x^i_{\{u_{i_t}^3,v'\}}$.
We increase $x^i_{\{u_{i_t}^3,v'\}}$ as much as we can, that is by $\min\left\{\left\lfloor \frac {deg_{x^i}(u_{i_t}^2) - deg_{x^i}(u_{i_t}^3)} 2 \right\rfloor,x^i_{\{u_{i_t}^3,v'\}}\right\}$, while we decrease $x^i_{\{u_{i_t}^3,v'\}}$ by exactly the same amount.
By repeating this step until it is no more possible to do so, we ensure that $|deg_{x^i}(u_{i_t}^2) - deg_{x^i}(u_{i_t}^3)| \leq 1$.

\item{\bf Rule 3}. Suppose $deg_{x^i}(u_{i_t}^2) = deg_{x^i}(u_{i_t}^3) + 1$ (the case $deg_{x^i}(u_{i_t}^3) = deg_{x^i}(u_{i_t}^2) + 1$ is symmetrical to this one). Let $v' \in N_{H_i}(M_{i_t})$ such that $x^i_{\{u_{i_t}^2,v'\}} > x^i_{\{u_{i_t}^3,v'\}}$. We decrease $x^i_{\{u_{i_t}^2,v'\}}$ by one unit; similarly, we increase $x^i_{\{u_{i_t}^2,u_{i_t}^3\}}$ by one unit.
Doing so, we ensure that the symmetry property holds for $M_{i_t}$.
\end{itemize}
Overall, we only need to scan ${\cal O}(1)$ times the set $N_{H_i}(M_{i_t})$, and so, we can ensure that both the saturation property and the symmetry property hold for $M_{i_t}$ in ${\cal O}(|N_{H_i}(M_{i_t})|)$-time.
However, doing so, we may break the saturation property or the symmetry property for some other $t' \neq t$ ({\it e.g.}, see Fig.~\ref{fig:bad-example}).
Therefore, if we start from an arbitrary $x^i$, the above procedure may take quasi polynomial time in order to converge.

\begin{figure}[h!]
\centering
\includegraphics[width=.7\textwidth]{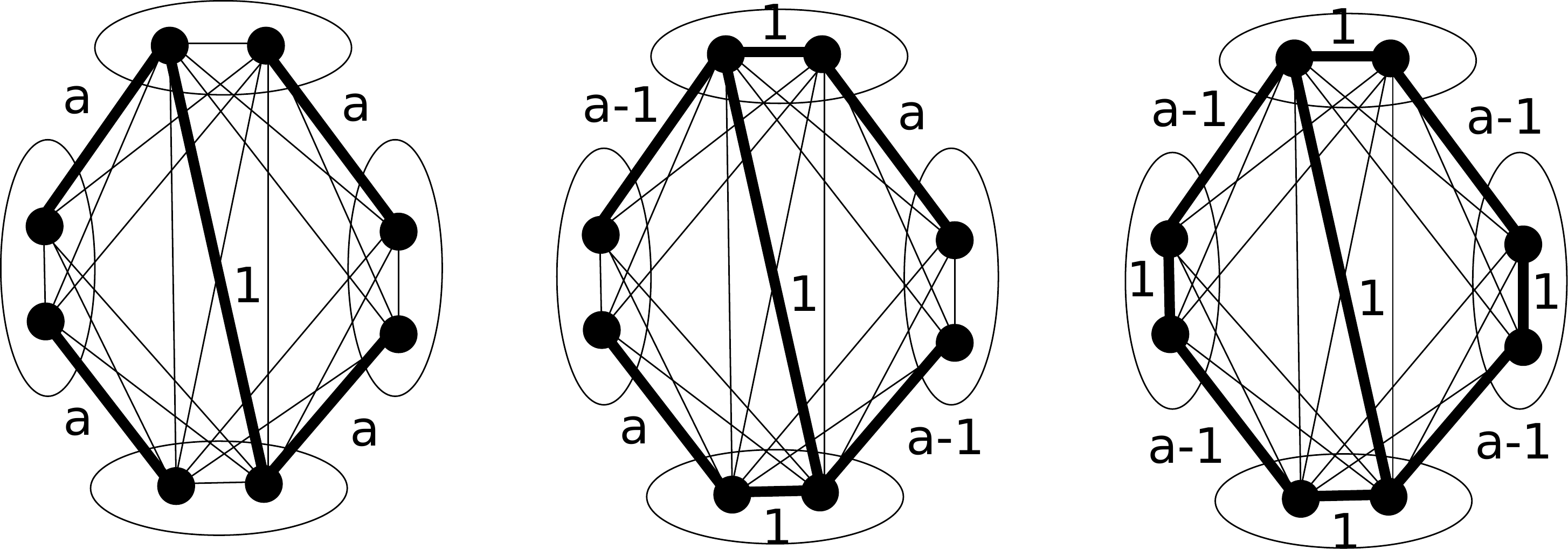}
\caption{An example where the naive processing stage requires ${\cal O}(||b||_1)$-time.}
\label{fig:bad-example}
\end{figure}

\medskip
In order to overcome this above issue, we assign costs on the edges. All the edges of $H_i$ have unit cost, except the edges $\{u_{i_t}^2,u_{i_t}^3\}$, for every $1 \leq t \leq l$, to which we assign cost $2$.
We compute a maximum-cardinality $b$-matching $x^i$ for the pair $H_i,b^{H_i}$ that is of {\em maximum cost}.
By Lemma~\ref{lem:b-matching}, this can be done in time ${\cal O}(|V(H_i)||E(H_i)|\log^2{|V(H_i)|})$.
Then, we apply the same procedure on $x^i$ as described above.
We claim that there is at most one loop for every $t$.
Indeed, let $1 \leq t \leq l$ be arbitrary.
We observe that Rules 1 and 2 do not change the cost of the $b$-matching.
Furthermore, for every $v' \in N_{H_i}(M_{i_t})$, Rules 1 and 2 do not change the $x^i$-degree of $v'$.
Hence, we can only break the saturation property or the symmetry property for some other $t' \neq t$ by applying Rule 3.
However, Rule 3 increases the cost of $x^i$, and so, since $x^i$ is supposed to be of maximum cost, this rule will never be applied.
Therefore, the claim is proved.
Overall, it implies that the postprocessing of $x^i$ takes time ${\cal O}(\sum_{t=1}^l |N_{H_i}(M_{i_t})|)$, that is in ${\cal O}(|E(H_i)|)$.
\end{proof}

\subsection*{Merging the partial solutions together}

Finally, before we can describe our main algorithm (Theorem~\ref{thm:main-result}) we need to consider the intermediate problem of merging two partial solutions (that have been obtained, respectively, for the two subgraphs outputted by a simple decomposition).
For that, we introduce the following data structure for storing a $b$-matching:

\begin{lemma}\label{lem:data-structure}
For every $G=(V,E)$ there exists a data structure that can be initialized in ${\cal O}(m)$-time, and such that the following properties hold:
\begin{itemize}
\item ({\bf Access}) An edge-weight assignment $(x_e)_{e \in E}$ is stored. Initially, all the edge-weights are set to $0$. For every $e \in E$, we can read and modify $x_e$ in constant-time.
\item ({\bf Split}) Furthermore, let $(U,W)$ be a split of $G$ and let $G_U = (U \cup \{w\}, E_U), \ G_W = (W \cup \{u\}, E_W)$ be the corresponding subgraphs of $G$. In ${\cal O}(1)$-time, we can modify the data structure so that it can store separate edge-weight assignments for $G_U$ and $G_W$ (initially set to $(x_e)_{e \in E_U}$ and $(x_e)_{e \in E_W}$).
\item ({\bf Merge}) Conversely, let $(U,W)$ be a split of $G$ and let $G_U = (U \cup \{w\}, E_U), \ G_W = (W \cup \{u\}, E_W)$ be the corresponding subgraphs of $G$. Suppose that two separate assignments $(x^U_e)_{e \in E_U}$ and $(x^W_e)_{e \in E_W}$ are stored in the data structure. In ${\cal O}(1)$-time, we can modify the data structure so that it stores the following edge-weight assignment for $G$:
$$x_e = \begin{cases}
x^U_e \ \mbox{if} \ e \in E_U \setminus  E_W \\
x^W_e \ \mbox{if} \ e \in E_W \setminus  E_U \\
\max\{x^U_e,x^W_e\} \ \mbox{if} \ e \in E_U \cap  E_W  \\
\mbox{undefined otherwise}.
\end{cases}$$
\end{itemize} 
\end{lemma}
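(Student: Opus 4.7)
The plan is to store, for each edge of the current graph, a single weight cell, and to exploit the fact that any split creates essentially one ``shared'' edge between the two sides. Concretely, for any split $(U, W)$ of $G$, since $u \in U$, $w \in W$, $U \cap W = \emptyset$, and $\{u,w\} \in E$ by the complete join between $C = N_G(W)$ and $D = N_G(U)$, we get $V(G_U) \cap V(G_W) = \{u,w\}$, and hence $E_U \cap E_W = \{\{u,w\}\}$. Every other edge of $E_U \cup E_W$ lies in exactly one of the two sides.

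To initialize, I would allocate an array $X$ indexed by $E(G)$ with all entries set to zero, and set up pointers from the two adjacency-list occurrences of each edge to its cell; this takes $O(m)$ time and makes Access constant-time. For Split $(U,W)$, I would allocate one fresh cell for $\{u,w\}$ initialized with the value of its current cell, so that one cell now represents $x^U_{\{u,w\}}$ and the other $x^W_{\{u,w\}}$, both equal to $x_{\{u,w\}}$ as required by the initial values $(x_e)_{e \in E_U}$ and $(x_e)_{e \in E_W}$. Every other edge of $E_U$ and $E_W$ belongs to exactly one of the two sides and keeps its existing cell automatically. This is $O(1)$ work.

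For Merge, I would read the two cells for $\{u,w\}$, compute their max, write it into a single re-fused cell, and free the other; all other cells are left unchanged. Since they already store the correct weights for the edges in $E_U \setminus E_W$ and $E_W \setminus E_U$, and the fused cell now stores $\max\{x^U_{\{u,w\}}, x^W_{\{u,w\}}\}$, the resulting assignment matches the piecewise formula of the lemma exactly. This is also $O(1)$.

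The only subtle point is recognizing that $E_U \cap E_W$ reduces to a single edge, after which all three operations are routine bookkeeping. Edges in $E \setminus (E_U \cup E_W)$ (the crossing edges $\{c,d\}$ with $c \in C \setminus \{u\}$, $d \in D \setminus \{w\}$) remain allocated in the array but are simply dropped from the active structure after the Split, which is consistent with the ``undefined otherwise'' clause in the merge formula. Under the nested Splits that occur during a depth-first traversal of the split decomposition tree, the ``active'' cell for each split marker can be tracked via a pointer stack, keeping every operation $O(1)$ throughout the algorithm.
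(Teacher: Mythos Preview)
Your proposal is correct and follows essentially the same approach as the paper: both hinge on the observation that $E_U \cap E_W = \{\{u,w\}\}$, and both implement Split and Merge by duplicating, respectively re-fusing, the single weight cell attached to that edge while leaving all other cells untouched. Your write-up is in fact slightly more detailed than the paper's (you spell out why $\{u,w\}\in E$, discuss the ``undefined'' crossing edges, and mention the pointer stack for nested splits), but the underlying idea is identical.
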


\begin{proof}
Every edge $e \in E$ has a pointer to its corresponding weight $x_e$ (initially set to $0$).
Now, consider any split $(U,W)$ of $G$ and let $G_U = (U \cup \{w\}, E_U), \ G_W = (W \cup \{u\}, E_W)$ be the corresponding subgraphs of $G$.
Observe that $E_U,E_W$ intersect in exactly one edge $e_{U,W} = \{u,w\}$.
So, in order to perform a split of the data structure, it suffices to split the pointer of $e_{U,W}$ in two new pointers, that are initially pointing to two distinct copies of the value $x_e$.
Note that for every pointer newly introduced, we need to keep track of the corresponding split $(U,W)$ and of the corresponding side ({\it i.e.}, $U$ or $W$).
Conversely, in order to merge the respective data structures obtained for $G_U$ and $G_W$, it suffices to extract the values $x_{e_{U,W}}^U$ and $x_{e_{U,W}}^W$ (on which the two new pointers introduced after the split are pointing to) and to set the original value $x_{e_{U,W}}$ to $\max\{x_{e_{U,W}}^U,x_{e_{U,W}}^W\}$.
\end{proof}

Let $(U,W)$ be a split of $G$ and let $G_U = (U \cup \{w\}, E_U), \ G_W = (W \cup \{u\}, E_W)$ be the corresponding subgraphs of $G$.
Consider some partial solutions $x^U$ and $x^W$ obtained, respectively, for the pairs $G_U,b^U$ and $G_W,b^W$ (for some capacity functions $b^U,b^W$ to be defined later in the description of our main algorithm).
We reduce the merging stage to a flow problem on a complete bipartite subgraph (induced by the split) with node-capacities.
We detail this next.

\begin{lemma}\label{lem:potential-function}
Suppose that $b^U$ (resp., $b^W$) satisfies $b^U_v \leq b_v$ for every $v \in U$ (resp., $b^W_v \leq b_v$ for every $v \in W$).
Let $x^U,x^W$ a $b$-matching for, respectively, the pairs $G_U,b^U$ and $G_W,b^W$ such that $deg_{x^U}(w) = deg_{x^W}(u) = d$.

\smallskip
Furthermore, for any graph $H$ let $\varphi(H) = |E(H)| + 4 \cdot (sc(H) - 1)$, with $sc(H)$ being the number of split components in any minimal split decomposition of $H$\footnote{We recall that the set of prime graphs in any minimal split decomposition is unique up to isomorphism~\cite{Rao08b}.}.

\smallskip
Then, in at most ${\cal O}(\varphi(G) - \varphi(G_U) - \varphi(G_W))$-time, we can obtain a valid $b$-matching $x$ for the pair $G,b$ such that $||x||_1 = ||x^U||_1+||x^W||_1 - d$.
\end{lemma}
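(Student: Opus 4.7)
The plan is to reduce the merge of $x^U$ and $x^W$ to solving a balanced integral transportation problem on the complete bipartite graph $C \times D$, where $C = N_G(W)\subseteq U$ and $D = N_G(U)\subseteq W$. Since by hypothesis $\deg_{x^U}(w) = \deg_{x^W}(u) = d$, the supplies $s_v := x^U_{\{v,w\}}$ (for $v\in C$) and demands $r_{v'} := x^W_{\{u,v'\}}$ (for $v'\in D$) both sum to $d$, so a feasible integral solution $(x_{\{v,v'\}})_{(v,v')\in C\times D}$ exists. I would compute one by the classical North-West corner rule in $O(|C|+|D|)$ time, which yields at most $|C|+|D|-1$ nonzero entries. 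The final $b$-matching $x$ is defined by keeping $x^U_e$ on $E(G[U])$, $x^W_e$ on $E(G[W])$, and using the transportation values on $C\times D$.

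Checking validity is routine. For every $v\in C$, $\deg_x(v) = \sum_{e\in E(G[U]),\,e\ni v} x^U_e + \sum_{v'\in D} x_{\{v,v'\}} = (\deg_{x^U}(v) - s_v) + s_v = \deg_{x^U}(v) \leq b^U_v \leq b_v$, and symmetrically for $D$; other vertices inherit valid degrees. The cardinality is $||x||_1 = (||x^U||_1 - d) + (||x^W||_1 - d) + d$. Operationally, I would first invoke the merge primitive of Lemma~\ref{lem:data-structure} in $O(1)$ time and then update at most $O(|C|+|D|)$ edge-weights in $C\times D$: namely the split-marker edges $\{v,w\}$ and $\{u,v'\}$ whose weights after the merge equal the supply/demand rather than the transportation value, plus the up-to-$|C|+|D|-1$ previously-$0$ interior edges of $C\times D$ that receive a positive transportation value.

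The main obstacle is bounding this $O(|C|+|D|)$ work by $\varphi(G) - \varphi(G_U) - \varphi(G_W)$. A direct count gives $|E(G)| - |E(G_U)| - |E(G_W)| = |C|\cdot|D| - |C| - |D|$, and since the simple decomposition of $G$ refines into a minimal split decomposition by independently decomposing $G_U$ and $G_W$, we have $sc(G) = sc(G_U) + sc(G_W)$, whence $\varphi(G) - \varphi(G_U) - \varphi(G_W) = |C|\cdot|D| - |C| - |D| + 4$. For $|C|,|D|\geq 2$ this rewrites as $(|C|-1)(|D|-1) + 3 \geq |C|+|D|$ using $(|C|-2)(|D|-2)\geq 0$, closing the bound. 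The degenerate case $|C|=1$ (symmetrically $|D|=1$) requires separate handling, since the $\varphi$-difference is then only $3$: there the transportation is forced to $x_{\{u,v'\}} = x^W_{\{u,v'\}}$, and since $u = c$ is a genuine vertex of $G$ the edges $\{u,v'\}\in E_W\setminus E_U$ are the literal edges $\{c,v'\}$ of $G$ whose weights are already stored correctly by the merge of Lemma~\ref{lem:data-structure}; so only the single shared edge $\{u,w\}$ need be rewritten from $\max\{x^U_{\{u,w\}},x^W_{\{u,w\}}\}$ to $x^W_{\{u,w\}}$, which is $O(1)$ work and fits the budget.
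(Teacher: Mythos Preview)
Your proof is correct and follows essentially the same approach as the paper: reduce the merge to a transportation problem on $C\times D$, solve it greedily in $O(|C|+|D|)$ time, and bound this by the $\varphi$-difference via $|C||D|-|C|-|D|+4\geq |C|+|D|$ when $|C|,|D|\geq 2$, with the degenerate case $|C|=1$ handled in $O(1)$. The only cosmetic difference is that the paper zeroes the split-marker edge weights \emph{before} invoking the merge primitive of Lemma~\ref{lem:data-structure}, whereas you merge first and then overwrite; both orderings yield the same $O(|C|+|D|)$ work and the same final assignment.
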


\begin{proof}
There are two cases.
First, suppose $C = N_G(W) = \{u\}$ (the case $D = N_G(U) = \{w\}$ is symmetrical to this one).
The split marker vertex $w$ is pendant in $G_U$, with its unique neighbour being $u$ (so, in particular, $x^U_{\{u,w\}} = d$).
In order to compute $x$, we set $x^U_{\{u,w\}}$ to $0$ and then we merge $x^U,x^W$.
By Lemma~\ref{lem:data-structure} this takes constant-time.
Furthermore, since $|E| - |E_U| - |E_W| = |C||D| - |C| - |D| = -1$, and $sc(G) = sc(G_U) + sc(G_W)$, we get $\varphi(G) - \varphi(G_U) - \varphi(G_W) = 3 > 0$.

Therefore, from now on we assume that $|C| \geq 2$ and $|D| \geq 2$.
For every $v \in C$ we assign a capacity $c_v = x^U_{\{v,w\}}$ and then we set $x^U_{\{v,w\}}$ to $0$.
In the same way, for every $v' \in D$ we assign a capacity $c_{v'} = x^W_{\{v',u\}}$ and then we set $x^W_{\{v',u\}}$ to $0$.
It takes ${\cal O}(|C| + |D|)$-time.
Then, let us merge $x^U,x^W$ in order to initialize $x$.
By Lemma~\ref{lem:data-structure} this takes constant-time.
While there exist a $v \in C$ and a $v' \in D$ such that $c_v > 0, \ c_{v'} > 0$ we pick one such pair $v,v'$ and we set: $x_{\{v,v'\}} = \min\{c_v,c_{v'}\}; \ c_v = c_v - x_{\{v,v'\}}; \ c_{v'} = c_{v'} - x_{\{v,v'\}}$.
Since for every loop, the capacity of at least one vertex drops to $0$, it takes total time ${\cal O}(|C| + |D|)$.
Furthermore, since $|C| \geq 2$ and $|D| \geq 2$ we have $|E| - |E_U| - |E_W| = |C||D| - (|C| + |D|) \geq 2(|C| + |D|) - 4 - (|C| + |D|) \geq |C| + |D| - 4$.
As a result, $\varphi(G) - \varphi(G_U) - \varphi(G_W) \geq |C|+|D| - 4 + 4 \geq \Omega(|C|+|D|)$.
\end{proof}

Overall, since there are at most $n-2$ components in any minimal split decomposition of $G$~\cite{Rao08b}, the merging stages take total time ${\cal O}(\varphi(G)) = {\cal O}(n+m)$.

\subsection*{Main result}

Our main result can be seen as an algorithmic proof of the following equality:

\begin{proposition}\label{prop:reduction-rule}
Let $G=(V,E),b$, let $(U,W)$ be a split of $G$ and let $H_W,b^W$ be as defined in Definition~\ref{def:reduction-rule}.
Given a maximum-cardinality $b$-matching $x$ for the pair $G,b$ and a maximum-cardinality $b$-matching $x^W$ for the pair $H_W,b^W$ we have: $$||x||_1 = ||x^W||_1 + \mu^U(0) - c_2^U $$
\end{proposition}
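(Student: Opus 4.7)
The plan is to prove the equality as two matching inequalities. One direction is already handled: applied to a maximum-cardinality $b$-matching $x$ for $G,b$, Lemma~\ref{lem:reduction-rule-1} produces some $b$-matching $x^W$ for $H_W,b^W$ with $||x^W||_1 \geq ||x||_1 + c_2^U - \mu^U(0)$, which certainly also holds when $x^W$ is itself chosen maximum. The real work is the reverse inequality: I must show that from any maximum-cardinality $b$-matching $x^W$ for $H_W,b^W$, one can build a $b$-matching $x$ for $G,b$ with $||x||_1 \geq ||x^W||_1 + \mu^U(0) - c_2^U$. Chaining the two inequalities on the maxima then forces equality.

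For the reverse inequality I first normalize $x^W$ so that it satisfies the saturation and symmetry properties at the module $M_u = \{u_1,u_2,u_3\}$, which is possible without changing $||x^W||_1$ by running (the one-module version of) the rewriting rules used in the proof of Lemma~\ref{lem:simultaneous}. Maximality additionally forces $x^W_{\{u_2,u_3\}}$ to fill whatever capacity is left at $u_2,u_3$, so $deg_{x^W}(u_2) = deg_{x^W}(u_3) = c_2^U$. Let $d$ denote the total weight of edges between $M_u$ and $D = N_G(U)$ in $x^W$. The two properties then force exactly two regimes: either $deg_{x^W}(u_1) < c_1^U$, in which case saturation kills all $M_u$-$D$ weight at $u_2,u_3$ and $d = deg_{x^W}(u_1) \leq c_1^U$ with $x^W_{\{u_2,u_3\}} = c_2^U$; or $deg_{x^W}(u_1) = c_1^U$ and $d = c_1^U + 2d'$ for some $d' \leq c_2^U$ with $x^W_{\{u_2,u_3\}} = c_2^U - d'$.

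Using Proposition~\ref{prop:monotonic} with the choice $b^U_w := d$, in both regimes the value $d$ lies where $\mu^U$ strictly grows, so I can pick a $b$-matching $y^U$ of $G_U,b^U$ with $deg_{y^U}(w) = d$ and $||y^U||_1 = \mu^U(d)$. I then build $x$ by keeping the weights of $y^U$ strictly inside $U$, the weights of $x^W$ strictly inside $W$, and routing the $d$ units of mass across the complete bipartite split $C \times D$ so as to meet the $C$-marginals $y^U_{\{v,w\}}$ and the $D$-marginals $\sum_{i=1}^{3} x^W_{\{u_i,v'\}}$; both marginals sum to $d$, so a greedy transportation yields such a distribution. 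A straightforward bookkeeping then gives $||x||_1 = ||x^W||_1 + \mu^U(d) - d - x^W_{\{u_2,u_3\}}$, which in both regimes simplifies, via the slope formula of Proposition~\ref{prop:monotonic}, to $||x^W||_1 + \mu^U(0) - c_2^U$. The delicate step will be justifying the normalization of $x^W$ (in particular that maximality pushes $x^W_{\{u_2,u_3\}}$ to the claimed value in each case); once that is in place, the case-by-case substitution is routine, and combining the two inequalities closes the equality.
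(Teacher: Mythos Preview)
Your proposal is correct and follows essentially the same route as the paper's proof: one inequality via Lemma~\ref{lem:reduction-rule-1}, then for the converse normalize $x^W$ so that the saturation and symmetry properties hold at $M_u$, use maximality to force $u_2,u_3$ to be saturated, set the capacity of $w$ in $G_U$ to the total $M_u$--$D$ weight $d$, pick a maximum $b$-matching on the $U$-side that saturates $w$ (which is guaranteed precisely because $d$ sits in the strictly increasing part of $\mu^U$), and finally route the $d$ units across the split. Your two-regime bookkeeping with $d$ and $d'$ is just a case split of the paper's unified parameters $c_1' = \deg_{x^W}(u_1)$ and $c_2' = \deg_{x^W}(u_2) - x^W_{\{u_2,u_3\}}$, and your greedy transportation is exactly what Lemma~\ref{lem:potential-function} does; the final arithmetic is identical.
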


\begin{proof}

We have $||x^W||_1 \geq ||x||_1 - \mu^U(0) + c_2^U$ by Lemma~\ref{lem:reduction-rule-1}.
In order to prove the converse inequality, we can assume w.l.o.g. that $x^W$ satisfies both the saturation property and the symmetry property w.r.t. the module $M_u$ (otherwise, by Lemma~\ref{lem:simultaneous}, we can process $x^W$ so that it is the case).
We partition $||x^W||_1$ as follows: $\mu^W = \sum_{e \in E(W)} x^W_e$, $c_1' = deg_{x^W}(u_1) \leq c_1^U$ and $c_2' = deg_{x^W}(u_2) - x^W_{\{u_2,u_3\}} = deg_{x^W}(u_3)- x^W_{\{u_2,u_3\}} \leq c_2^U$.
Since we assume that $x^W$ satisfies both the saturation property and the symmetry property w.r.t. $M_u$, we have $c_2' > 0$ only if $c_1' = c_1^U$.
Furthermore, we observe that $u_2$ and $u_3$ must be saturated (otherwise, we could increase the cardinality of the $b$-matching by setting $x_{\{u_2,u_3\}}^W = c_2^U - c_2'$).
Therefore, we get: $$||x^W||_1 = \mu^W + c_1' + 2c_2' + (c_2^U - c_2') = \mu^W + c_1' +c_2' + c_2^U.$$

\begin{figure}[h!]
\centering
\includegraphics[width=.8\textwidth]{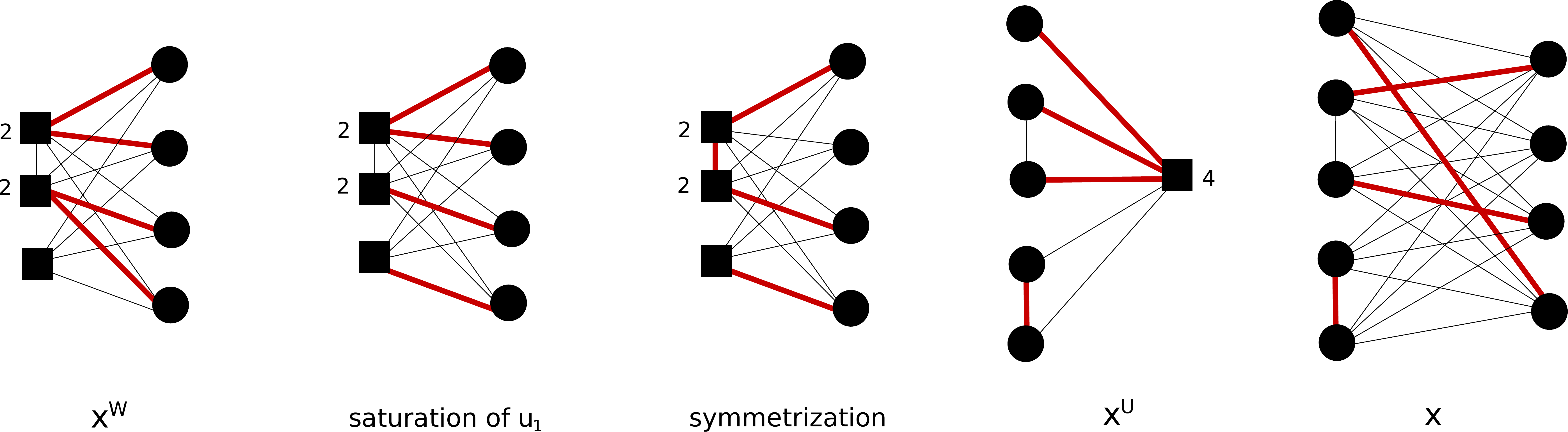}
\caption{The construction of $x'$. Vertices with capacity greater than $1$ are labeled with their capacity. Thin and bold edges have respective weights $0$ and $1$.}
\label{fig:reduction-2}
\end{figure}

We define $b_u^W = b_w^U = c_1' + 2c_2'$.
Then, we proceed as follows (see Fig.~\ref{fig:reduction-2} for an illustration).
\begin{itemize}
\item
We transform $x^W$ into a $b$-matching for the pair $G_W,b^W$ by setting $x^W_{\{u,v'\}} = x^W_{\{u_1,v'\}} + x^W_{\{u_2,v'\}} + x^W_{\{u_3,v'\}}$ for every $v' \in N_{G_W}(u) = D$.
Note that we have $deg_{x^W}(u) = b_u^W = c_1' + 2c_2'$.
Furthermore, the cardinality of the $b$-matching has decreased by $x^W_{\{u_2,u_3\}} = c_2^U - c_2'$.
\item
Let $x^U$ be a $b$-matching for the pair $G_U,b^U$ of maximum cardinality $\mu^U(c_1'+2c_2')$.
Since $c_1' \leq c_1^U$, $c_2' > 0$ only if $c_1' = c_1^U$, and $c_2' \leq c_2^U$, the following can be deduced from Proposition~\ref{prop:monotonic}: $||x^U||_1 = \mu^U(c_1'+2c_2') = \mu^U(0) + c_1' + c_2'$; and the split marker vertex $w$ is saturated in $x^U$, {\it i.e.}, $deg_{x^U}(w) = b^U_w = c_1'+2c_2'$.
\end{itemize}
Finally, since we have $deg_{x^W}(u) = deg_{x^U}(w) = c_1'+2c_2'$, we can define a $b$-matching $x'$ for the pair $G,b$ by applying Lemma~\ref{lem:potential-function}. 
Doing so, we get $||x||_1 \geq ||x'||_1 = ||x^U||_1 + \left(||x^W||_1 - (c_2^U - c_2')\right) - (c_1'+2c_2') = \mu^U(0) + c_1' + c_2' + ||x^W||_1 - (c_2^U + c_1'+c_2') = ||x^W||_1 + \mu^U(0) - c_2^U$.
\end{proof}

Finally, we now prove the main result in this paper.

\begin{theorem}\label{thm:main-result}
For every pair $G=(V,E), b$ with $sw(G) \leq k$, we can solve $b$-{\sc Matching} in ${\cal O}((k\log^2{k})\cdot(m+n) \cdot \log{||b||_1})$-time.
\end{theorem}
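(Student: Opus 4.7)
The plan is to compute the split decomposition tree $T$ of a minimal split decomposition of $G$ in linear time~\cite{Rao08b}, root $T$ at an arbitrary component $C_1$, and perform two traversals of $T$: a bottom-up pass that folds each subtree into a single triple as prescribed by Proposition~\ref{prop:monotonic}, and a top-down pass that reconstructs an actual maximum-cardinality $b$-matching using the data structure of Lemma~\ref{lem:data-structure}.

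In the bottom-up pass I would process components in postorder. Let $C_i$ be the current component and $C_{j_1},\ldots,C_{j_l}$ its children in $T$; each $C_{j_t}$ is attached to $C_i$ through a split whose marker inside $C_i$ I call $u_{j_t}$. Assuming the triples $(\mu^{j_t}(0),c_1^{j_t},c_2^{j_t})$ for the children are already available, I would build the auxiliary graph $H_i$ by replacing every $u_{j_t}$ with the three-vertex module of Definition~\ref{def:reduction-rule} and setting the capacities $b^{H_i}_{u_{j_t}^1}=c_1^{j_t}$, $b^{H_i}_{u_{j_t}^2}=b^{H_i}_{u_{j_t}^3}=c_2^{j_t}$ accordingly. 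If $C_i$ is not the root, I would call Proposition~\ref{prop:monotonic} on $H_i$ with respect to the upward split marker $w_i$ in order to produce the triple $(\mu^i(0),c_1^i,c_2^i)$, to be consumed later by $C_i$'s parent. At the root I compute $\mu^1(0)$ directly via Lemma~\ref{lem:b-matching}, and the overall optimal cardinality of $G,b$ follows by iterating Proposition~\ref{prop:reduction-rule} along the traversal: it equals $\mu^1(0)$ plus the telescoping sum $\sum_{i\neq 1}(\mu^i(0)-c_2^i)$.

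For the top-down pass I would process components in preorder. At every $C_i$ the parent has already fixed an effective capacity for $w_i$; I would compute a maximum-cardinality maximum-cost $b$-matching of $(H_i,b^{H_i})$ subject to this constraint, and apply Lemma~\ref{lem:simultaneous} so that the saturation and symmetry properties hold at every module $M_{u_{j_t}}$. The stabilized $x$-degrees at $M_{u_{j_t}}$ then pin down the capacity $d_{j_t}$ that the upward split marker of the child $C_{j_t}$ must inherit, exactly as in the construction inside the proof of Proposition~\ref{prop:reduction-rule}. Once every component has been processed, the data structure of Lemma~\ref{lem:data-structure} together with Lemma~\ref{lem:potential-function} merges all the partial solutions into a $b$-matching of $G$ in $O(n+m)$ total time.

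For the complexity, each component has order at most $\max\{3,k\}$, so $H_i$ has $O(k)$ vertices. By Lemma~\ref{lem:b-matching} a single call to the $b$-matching subroutine on $H_i$ runs in $O(k\log^2 k \cdot |E(H_i)|)$ time, and Proposition~\ref{prop:monotonic} needs $O(\log ||b||_1)$ such calls. Summing $|E(H_i)|$ across all components yields $O(n+m)$, since each edge of $G$ lives in a unique $H_i$ and the gadget edges added at each split marker can be charged against the original edges incident to that marker inside $C_i$. This produces the claimed $O((k\log^2 k)(n+m)\log||b||_1)$ bound. The main subtlety I expect is in the top-down pass: one must ensure that the saturation and symmetry properties can be enforced \emph{simultaneously} at every module of $C_i$ while remaining consistent with the piecewise-linear structure imposed by the triples coming from below. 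This compatibility is precisely what the maximum-cost refinement of Lemma~\ref{lem:simultaneous} combined with Proposition~\ref{prop:monotonic} was designed to guarantee, but checking it carefully (in particular verifying that the $d_{j_t}$ values handed to children always lie on the portion of the curve where the child's triple is tight) is where the proof will require the most attention.
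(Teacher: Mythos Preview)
Your proposal is correct and follows essentially the same two-pass strategy as the paper: a bottom-up traversal that builds the auxiliary graph $H_i$ at each component (via Definition~\ref{def:reduction-rule}) and invokes Proposition~\ref{prop:monotonic} to extract the triple, followed by a top-down traversal that uses Lemma~\ref{lem:simultaneous} to obtain a $b$-matching of $H_i$ with the saturation and symmetry properties, hands the induced capacities to the children, and finally merges everything via Lemma~\ref{lem:potential-function}. The telescoping formula you wrote for the optimum is exactly the unrolled version of the paper's recurrence $\mu^i(0)=\mu^{H_i}(0)+\sum_t(\mu^{i_t}(0)-c_2^{i_t})$ (note that the paper reserves the symbol $\mu^i$ for the function on the full subtree graph $G_i$, whereas you use it for $H_i$; your formula is correct under your convention). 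The ``subtlety'' you flag---that the capacity passed to each child must land on the tight part of that child's piecewise-linear curve---is precisely what the saturation and symmetry properties guarantee, and the paper dispatches it in one line by invoking Proposition~\ref{prop:monotonic}.
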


\begin{proof}
Let $C_1, C_2, \ldots, C_s, \ s = sc(G)$ be the split components in any minimal split decomposition of $G$.
Furthermore, let $T$ be the corresponding split decomposition tree.
It can be computed in linear-time~\cite{Rao08b}.
We root $T$ in $C_1$.

For every $1 \leq i \leq s$, let $T_i$ be the subtree of $T$ that is rooted in $C_i$.
If $i > 1$ then let $C_{p(i)}$ be its parent in $T$.
By construction of $T$, the edge $\{C_{p(i)}, C_i\} \in E(T)$ corresponds to a split $(U_i,W_i)$ of $G$, where $V(C_i) \cap V \subseteq U_i$.
Let $G_{U_i} = (U_i \cup \{w_i\}, E_{U_i}), \ G_{W_i} = (W_i \cup \{u_i\}, E_{W_i})$ be the corresponding subgraphs of $G$.
As we observed in~\cite{CDP18}, $T_i$ is a split decomposition tree of $G_{U_i}$, $T \setminus T_i$ is a split decomposition tree of $G_{W_i}$.

\medskip
Our algorithm proceeds in two main steps, with each step corresponding to a different traversal of the tree $T$.
First, let $G_1 = G$ and let $G_i = G_{U_i}$ for every $i > 1$.
Let $b^1 = b$ and, for every $i > 1$ let $b^i$ be the restriction of $b$ to $U_i$.
We note that for any $i > 1$, $b^i$ does not assign any capacity to the split marker vertex $w_i$.
Up to adding a dummy isolated vertex $w_1$ to $G_1$, we assume that this above property holds for any $i$.
Then, for any $i$, we compute the triple $(\mu^i(0),c_1^i,c_2^i)$ w.r.t. $G_i,b^i$ and $w_i$ (as defined in Proposition~\ref{prop:monotonic}).

In order to do so, we proceed by dynamic programming on the tree $T$, as follows.
Let $C_{i_1}, C_{i_2}, \ldots, C_{i_l}$ be the children of $C_i$ in $T$.
Every edge $\{C_i, C_{i_t}\} \in E(T), \ 1 \leq t \leq l$ corresponds to a split $(U_{i_t},W_{i_t})$ of $G_i$, where $V(C_{i_t}) \cap V(G_i) \subseteq U_{i_t}$.
We name $w_{i_t} \in V(C_{i_t}), \ u_{i_t} \in V(C_i)$ the vertices added after the simple decomposition.
Furthermore, let $(\mu^{i_t}(0),c_1^{i_t},c_2^{i_t})$ be the triple corresponding to $G_{i_t},b^{i_t}$ and $w_{i_t}$ (obtained by dynamic programming).
We apply the reduction rule of Definition~\ref{def:reduction-rule} --- {\it i.e.}, we replace the split marker vertex $u_{i_t}$ by the module $M_{i_t} = \{u_{i_t}^1,u_{i_t}^2,u_{i_t}^3\}$ where $u_{i_t}^1$ has capacity $c_1^{i_t}$ and the two adjacent vertices $u_{i_t}^2,u_{i_t}^3$ have equal capacity $c_2^{i_t}$.
Doing so for every $1 \leq t \leq l$, we obtain a pair $H_i,b^{H_i}$ where $H_i$ is a supergraph of $C_i$ such that: $|V(H_i)| \leq 3|V(C_i)| \leq \max\{3k,9\}$ and $|E(H_i)| \leq 9|E(C_i)| + |V(C_i)|$.
Let us compute the triple $(\mu^{H_i}(0),c_1^i,c_2^i)$ corresponding to $H_i,b^{H_i}$ and $w_i$.
By Proposition~\ref{prop:monotonic} it can be done in time ${\cal O}(|V(H_i)||E(H_i)|\log^2{|V(H_i)|}\log{||b||_1})$, that is in ${\cal O}((k\log^2{k})\cdot(|E(C_i)| + |V(C_i)|)\log{||b||_1})$.

Finally, by applying Proposition~\ref{prop:reduction-rule} for every split $(U_{i_t},W_{i_t})$ we have: $$\mu^i(0) = \mu^{H_i}(0) + \sum_{t=1}^l (\mu^{i_t}(0) - c_2^{i_t}).$$
Overall, this step takes total time ${\cal O}((k\log^2{k}) \cdot \sum_i (|E(C_i)| + |V(C_i)|) \cdot \log{||b||_1}) = {\cal O}((k\log^2{k}) \cdot (n + m) \cdot \log{||b||_1})$.
Furthermore, since $G_1 = G$, we have computed the maximum cardinality $\mu^1(0)$ of any $b$-matching of $G$.

\medskip
Second, we compute a $b$-matching for the pair $G,b$ that is of maximum cardinality $\mu^1(0)$, by reverse dynamic programming on $T$.
More precisely, for any $i$ let $b_{w_i}^i$ be a fixed capacity for the split marker vertex $w_i$ (if $i=1$ then we set $b_{w_1}^1 = 0$; otherwise, for $i >1$, $b_{w_i}^i$ is obtained by reverse dynamic programming).
In what follows, we compute a maximum-cardinality $b$-matching for the pair $G_i,b^i$.
For that we set $b^{H_i}_{w_i} = b^i_{w_i}$.
\begin{itemize}
\item
We compute a maximum-cardinality $x^i$ for the pair $H_i,b^{H_i}$ such that both the saturation property and the symmetry property hold for every $1 \leq t \leq l$.
By Lemma~\ref{lem:simultaneous}, it can be done in time ${\cal O}(|V(H_i)||E(H_i)|\log^2{|V(H_i)|})$, that is in ${\cal O}((k\log^2{k})\cdot(|E(C_i)| + |V(C_i)|))$.
\item
For every $1 \leq t \leq l$, let us define $b^i_{u_{i_t}} = deg_{x^i}(u_{i_t}^1) + 2 \cdot deg_{x^i}(u_{i_t}^2)$.
We merge $M_{i_t}$ into the original split marker vertex $u_{i_t}$.
Furthermore, we assign the capacity $b^i_{u_{i_t}}$ to $u_{i_t}$, and we update the $b$-matching $x^i$ such that, for every $v \in N_{H_i}(M_{i_t})$, we have $x^i_{\{v,u_{i_t}\}} = x^i_{\{v,u_{i_t}^1\}} + x^i_{\{v,u_{i_t}^2\}} + x^i_{\{v,u_{i_t}^3\}}$.
Doing so, we transform $x^i$ into a $b$-matching for $C_i,b^i$ where all vertices $u_{i_1},u_{i_2},\ldots,u_{i_l}$ are saturated.
This transformation has decreased the cardinality of the solution by: $$\sum_{t=1}^l x^i_{\{u_{i_t}^2,u_{i_t}^3\}} \leq \sum_{t=1}^l c_2^{i_t}.$$
\item
For every $1 \leq t \leq l$ we set $b^{i_t}_{w_{i_t}} = b^i_{u_{i_t}}$ and then we compute a maximum-cardinality $b$-matching $x^{i_t}$ for the pair $G_{i_t},b^{i_t}$.
Since the symmetry and saturation properties hold, we can deduce from Proposition~\ref{prop:monotonic} that $w_{i_t}$ is saturated.
\item
By applying the routine of Lemma~\ref{lem:potential-function} for every incident split $(U_{i_t},W_{i_t})$ we merge $x^i$ with all the $x^{i_t}$'s until we obtain a $b$-matching for $G_i,b^i$.
It takes time ${\cal O}(\varphi(G_i) - \sum_{t=1}^l \varphi(G_{i_t}))$.
\end{itemize}
Finally, the above $b$-matching is of maximum cardinality, that follows from Proposition~\ref{prop:reduction-rule} (applied for every incident split).

Overall, this second step of the algorithm takes total time ${\cal O}(\varphi(G)) + {\cal O}((k\log^2{k}) \cdot \sum_i (|E(C_i)| + |V(C_i)|)) = {\cal O}((k\log^2{k}) \cdot (n + m))$.
\end{proof}

\begin{corollary}\label{cor:max-matching}
For every graph $G=(V,E)$ with $sw(G) \leq k$, we can solve {\sc Maximum Matching} in ${\cal O}((k\log^2{k})\cdot(m+n) \cdot \log{n})$-time.
\end{corollary}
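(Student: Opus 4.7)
The plan is to observe that \textsc{Maximum Matching} is precisely the special case of $b$-\textsc{Matching} obtained by setting $b_v = 1$ for every $v \in V$, and then to invoke Theorem~\ref{thm:main-result} as a black box. First I would define the capacity function $b : V \to \mathbb{N}$ by $b_v := 1$ for all $v$. Under this choice, any $b$-matching $x = (x_e)_{e \in E}$ satisfies $x_e \in \{0,1\}$ for every edge (since each coefficient is a nonnegative integer and it is bounded above by $b_v = 1$ through the constraint at either endpoint), and no two edges with $x_e = 1$ share an endpoint. Hence the support $F = \{e \in E \mid x_e = 1\}$ is an ordinary matching of $G$ with $|F| = \|x\|_1$, and conversely any matching of $G$ gives rise to a feasible $b$-matching of the same cardinality. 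Thus maximum-cardinality $b$-matchings and maximum matchings coincide for this instance.

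It then remains to substitute the parameters into Theorem~\ref{thm:main-result}. With the above capacity function we have $\|b\|_1 = \sum_{v \in V} b_v = n$, and so $\log \|b\|_1 = \log n$. Applying the theorem to the pair $(G,b)$ therefore produces a maximum-cardinality $b$-matching (equivalently, a maximum matching of $G$) in time ${\cal O}\bigl((k\log^2{k})\cdot(m+n) \cdot \log n\bigr)$, which is exactly the bound asserted in the statement. There is no real obstacle here: the only thing to check is the trivial equivalence between unit-capacity $b$-matchings and ordinary matchings, after which the corollary is immediate from the main theorem.
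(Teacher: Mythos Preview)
Your proposal is correct and matches the paper's intent: the corollary is stated without proof there, precisely because it follows immediately from Theorem~\ref{thm:main-result} by taking $b_v=1$ for all $v$, so that $\|b\|_1=n$. Your additional verification that unit-capacity $b$-matchings coincide with ordinary matchings is accurate and completes the argument.
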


\section{Open questions}\label{sec:ccl}

We presented an algorithm for solving $b$-{\sc Matching} on distance-hereditary graphs, and more generally on any graph with bounded split-width.
It would be interesting to extend our technique in order to solve the more general {\sc Maximum-Cost} $b$-{\sc Matching} problem.
In particular, this would imply a quasi linear-time algorithm for {\sc Maximum-Weight Matching} on bounded split-width graphs.
Our main difficulty for handling with this more general case is that edges in a different split can now have pairwise different costs.
We do not know how to encode this additional information within our gadgets in a compact manner.

Finally, we stress that, as a byproduct of our main result, we obtained an ${\cal O}((n+m)\log n)$-time {\sc Maximum Matching} algorithm on graphs with bounded split-width. 
We ask whether there exists a linear-time algorithm for this problem.
In a companion paper~\cite{DuP18+}, we prove with a completely different approach that {\sc Maximum Matching} can be solved in ${\cal O}(n+m)$-time on distance-hereditary graphs.
However, it is not clear to us whether similar techniques can be used for bounded split-width graphs in general.

\bibliographystyle{abbrv}
\small
\bibliography{bibliography-split}

\end{document}